\def\block(#1,#2)#3{\multicolumn{#2}{c}{\multirow{#1}{*}{$ #3 $}}}
\newtheorem{thm}{Theorem}
\newtheorem{rmrk}{Remark}
\newtheorem{lem}{Lemma}
\newtheorem{defn}{Definition}
\theoremstyle{definition}
\newenvironment{example}
  {\pushQED{\qed}\examplex}
  {\popQED\endexamplex}
\begin{document}
\title{Product Matrix MSR Codes with Bandwidth Adaptive Exact Repair}


\author{\IEEEauthorblockN{Kaveh Mahdaviani\IEEEauthorrefmark{1}, Soheil Mohajer\IEEEauthorrefmark{2}, and Ashish Khisti\IEEEauthorrefmark{1} }\\
\IEEEauthorblockA{\IEEEauthorrefmark{1}ECE Dept., University of Toronto, Toronto, ON M5S3G4, Canada\\
\IEEEauthorrefmark{2}ECE Dept., University of Minnesota, Minneapolis, MN 55455, USA \\              
Email: \{kaveh, akhisti\}@comm.utoronto.ca,~soheil@umn.edu}}

\maketitle

\begin{abstract}
In a distributed storage systems (DSS) with $k$ systematic nodes, robustness against node failure is commonly provided by storing redundancy in a number of other nodes and performing repair mechanism to reproduce the content of the failed nodes. Efficiency is then achieved by minimizing the storage overhead and the amount of data transmission required for data reconstruction and repair, provided by coding solutions such as regenerating codes \cite{Regenerating}. Common explicit regenerating code constructions enable efficient repair through accessing a predefined number, $d$, of arbitrary chosen available nodes, namely helpers. In practice, however, the state of the system dynamically changes based on the request load, the link traffic, \emph{etc.}, and the parameters which optimize system's performance vary accordingly. It is then desirable to have coding schemes which are able to operate optimally under a range of different parameters simultaneously. Specifically, adaptivity in the number of helper nodes for repair is of interest. While robustness requires capability of performing repair with small number of helpers, it is desirable to use as many helpers as available to reduce the transmission delay and total repair traffic.

In this work we focus on the minimum storage regenerating (MSR) codes, where each of the $n$ nodes in the network is supposed to store $\alpha$ information units, and the source data of size $k\alpha$ could be recovered from any arbitrary set of $k$ nodes. We introduce a class of MSR codes that realize optimal repair bandwidth simultaneously with a set of different choices for the number of helpers, namely $D=\{d_{1}, \cdots, d_{\delta}\}$. Our coding scheme follows the Product Matrix (PM) framework introduced in \cite{PM_Codes}, and could be considered as a generalization of the PM MSR code presented in \cite{PM_Codes}, such that any $d_{i} = (i+1)(k-1)$ helpers can perform an optimal repair. As a result, the coding rate in our construction is limited by $\frac{k}{n}\leq \frac{1}{2}$. However, similar to the original design of PM MSR codes, our solution can realize practical values of the parameter $\alpha$. Recently \cite{ExplicitMSR} has presented another explicit MSR coding scheme which is capable of performing optimal repair for various number of helpers. The solution presented in \cite{ExplicitMSR} works for any arbitrary set of parameters $k, D$ and can achieve high coding rates, but the required $\alpha$ for this code is exponentially large. We show that the required value for $\alpha$ in the coding scheme presented in this work is exponentially smaller when compared to \cite{ExplicitMSR} for the same set of other parameters. Particularly, for a DSS with $n$ nodes, and $k$ systematic nodes, the required value for $\alpha$ is reduced from $s^n$ to $sk$, where $s=\mathrm{lcm}\left(d_{1}-k+1,\cdots,d_{\delta}-k+1\right)$. We also show the required field size in the presented coding scheme is equal to $n$\footnote{This work was presented in part at the 2017 IEEE Information Theory Workshop (ITW), Kaohsiung, Taiwan \cite{BAMSR_ITW17}.}.

\end{abstract}

\IEEEpeerreviewmaketitle

\section{Introduction}

Distributed storage systems (DSS) are compelling solutions to the fast growth of the demand in storage and accessibility of digital content. The main feature of such systems is to provide accessibility and durability for the stored data by introducing redundancy. In parallel, the number of storage components as well as the number of users connecting to these servers are dramatically increasing. These increase the chance of failures in the system, due to storage device failures or inaccessible nodes with overloaded traffic. Replication is the simplest approach to make the distributed storage system robust against such failures, which is implemented in systems such as \cite{Hadoop, HDFS_Guide, HDFS_Guide}. This approach however, provides simplicity in the cost of huge storage overhead. 

In the literature of erasure codes, there exists solutions, such as the Reed-Solomon (RS) code \cite{RS_Codes}, which offer similar fault-tolerance level as the replication does, with significantly less storage overhead. For instance, a 3-replication based DSS needs to accommodate two mirrors for every single storage node in order to achieve reliability guarantee against any simultaneous two node failure, which arises $200\%$ storage overhead, while \emph{maximum distance separable} (MDS) codes, such as RS code, achieve the same level of fault-tolerance guarantee by adding only two redundant storage nodes for the whole system. In other words, an erasure code with \emph{MDS property} can guarantee to recover the whole source data from any subset of stored encoded segments of collective size equal to the source data size.

On the other hand, when a node fails in a DSS, it needs to be replaced by a new node in order to maintain the system's performance. Such procedure is referred to as a \emph{repair}. To perform a repair, the system downloads some data from a subset of the surviving nodes, called \emph{helper nodes}. The amount of data downloaded for a repair is referred to as the \emph{repair bandwidth}. In conventional MDS erasure codes, such as the RS codes, one has to reconstruct the entire source data and re-encode it to recover a single lost segment. Hence, the repair bandwidth of these codes is at least as large as the size of the whole source data.

Considering both storage overhead and repair bandwidth simultaneously, the family of \emph{regenerating codes} \cite{Regenerating_Conf, Regenerating} offers a very efficient coding mechanism for distributed storage systems. More precisely, a regenerating code on a Galois field $\mathbb{F}_{q}$ for a DSS with $n$ storage nodes, maps the source data of size $F$ symbols into $n$ pieces of size $\alpha$ symbols each, and stores them in $n$ separate nodes, such that any $k$ out of $n$ nodes suffice to recover the data. Such a system is capable of tolerating up to $(n-k)$ node failures. Moreover, upon failure of one node, it can be replaced by a new node whose content of size $\alpha$ symbols is determined by connecting to an arbitrary set of $d$ (where $d \geq k$) helper nodes, and downloading $\beta$ symbols form each (where $\beta \leq \alpha$). Hence, the total repair bandwidth in regenerating codes is $d\beta$, which is denoted by $\gamma$. The parameter $\alpha$ is also referred to as the \emph{subpacketization level}.

Ideally, one would like to minimize the storage overhead, and repair bandwidth simultaneously. It turns out that for a given file size $F$, there is a trade-off between the subpacketization level $\alpha$ and the repair bandwidth $\gamma$, and one can be minimized only at the cost of a penalty for the other \cite{Regenerating}. In  particular, at one extreme point of this trade-off, one could first minimize the subpacketization level, $\alpha$, and then minimize the \emph{per-node repair bandwidth}, $\beta$, to obtain a \emph{minimum storage regenerating} (MSR) code. As a result, MSR codes have the MDS property, and also minimize the repair bandwidth for the given $\alpha$ \cite{Regenerating}, which means for an MSR code we have $F=k\alpha$, and 
\begin{align}\label{eq_betaMSR}
\beta = \frac{F}{k(d-k+1)}.
\end{align}
The total repair bandwidth ($d\beta$) of an MSR code is then upper-bounded by that of RS code, and only coincides with that when $d=k$. In other word, the total repair bandwidth is decreasing super-linearly as $d$ grows in MSR codes. 

Reversing the order of minimization between $\alpha$, and $\beta$ results in another extreme point of the trade-off. Such codes are not MDS and hence have more storage overhead but provide the smallest possible repair bandwidth and are referred to as \emph{minimum repair bandwidth} (MBR) codes. Our focus in this work is on MSR codes as they minimize the storage cost.

In general, two distinct types of repair can be identified: in an \emph{exact repair} scenario the replacement generated node will contain the same data as stored in the failed node. In the so called \emph{functional repair}, however, the replacement node may store a different content, provided that the entire new system maintain the same properties as of the original one. In practice, exact repair regenerating codes are much more appealing, mainly due to the fact that they could provide systematic encoding, which is a significant advantage in practice.

It is shown that for functional repair MSR code design could be translated into a linear network coding problem \cite{Regenerating}. Design of such codes with exact repair property is more challenging, due to a large number of constraints that should be simultaneously satisfied. Various constructions for exact repair MSR codes have been proposed for various sets of parameters \cite{PM_Codes, Exact_AI_Asymptotic, Permutation, Zigzag_Codes, Polynimial, Alternative, Genral_sub_packet, Small_field, ExplicitMSR_Tamo, ExplicitMSR, ExplicitMSR_nearOptimal, Coupled_Layer}.

Among the numerous available regenerating code designs, the common adopted model considers a rigid predetermined number $d$ (where $k \leq d \leq n-1$) of helpers required for any repair procedure. Each of these helpers is also assumed to provide $\beta = \gamma/d$ repair symbols. This sets a single threshold for the system's capability to perform the repair. On the other hand, in practice the state of the system dynamically changes as a function of various factors including availability of nodes, traffic load, available bandwidth, \emph{etc.} It has been shown that when such characteristics change in the system the optimal number of helper nodes for minimizing the cost of repair changes \cite{Kermarrec_System_Perspective}. Therefore, runtime adaptation would be of great value towards optimizing the performance. For instance, when the system is heavily loaded by many read requests, many nodes in the network might not be capable to provide the required repair bandwidth and hence would be considered unavailable for the repair. This may result in system's failure to perform the repair, while there might be a few nodes, less than $d$, which are capable of providing even more than $\beta$ repair symbols. A natural question is whether one can download more data from the available helpers, and accomplish the repair without the struggler helpers. Besides, from (\ref{eq_betaMSR}) it is clear that in optimal repair of MSR codes, increasing the number of helpers reduces both $\beta$, and $\gamma=d\beta$. One could consider a situation in which there are many nodes, more than $d$, which are capable of contributing to the repair. It would be of great practical value then if the system could increase the number of helpers and reduce both the per-node and total repair traffic. Note that this will also reduce the transmission delay, which is one of the main bottlenecks in the DSS's performance. We refer to such property as \emph{bandwidth adaptivity}. Note that the dynamic capability of service for storage nodes is a well-known characteristic for many practical distributed systems such as peer-to-peer systems or heterogeneous network topologies \cite{SpaceMonkey, Tahoe, CFS, Google, OceanStore, TotalRecall}.

The design of such codes has been of interest and the significance of bandwidth adaptivity in the performance of the system has been emphasised in \cite{Flexible_Regenerating, BWAdapt_MFR, BWAdapt_Kermarrec ,Exact_AI_Asymptotic, BWAdapt_Opportunistic, ProgressiveEngagement}. However, these works mainly focus on the study of fundamental limits and appealing properties of regenerating codes with bandwidth adaptivity, such as the significant improvement in the mean-time-to-data-loss (MTTDL), and the coding schemes presented in these works either only provide functional repair \cite{Flexible_Regenerating, BWAdapt_MFR, BWAdapt_Kermarrec, BWAdapt_Opportunistic}, or relax some of the constraints and consequently yield to a sub-optimum trade-of \cite{Exact_AI_Asymptotic, ProgressiveEngagement}.

When considering the optimal exact repair, it is a challenging problem to design such coding scheme with a large flexibility degree since it needs to satisfy many optimality conditions simultaneously. As a result, this problem has only been considered for the MSR \cite{Exact_AI_Asymptotic, ExplicitMSR}, and MBR \cite{BAER_ISIT16} extreme points of the trade-off. For the MBR case, \cite{BAER_ISIT16} provides an exact repair bandwidth adaptive solution for a wide range of practical parameters based on the \emph{Product Matrix} framework introduced in \cite{PM_Codes}. In \cite{Exact_AI_Asymptotic} a bandwidth adaptive code is provided based on interference alignment, which only achieves the MSR characteristics when both $\alpha$ and $\beta$ tend to infinity. The first explicit exact repair MSR code constructions which satisfy the bandwidth adaptivity for finite values of $\alpha$ and $\beta$ are introduced in \cite{ExplicitMSR}. These constructions work for any parameters $k$, $n$, and all values of $d$ such that $k<d<n$. However, the required values for the parameters $\alpha$, and $\beta$ in these constructions are still very huge (\emph{i.e.} exponentially large in $n$), and hence they only achieve optimality for extremely large contents. Recently, \cite{ExplicitMSR_nearOptimal} introduced a modified version of the code constructions in \cite{ExplicitMSR} which achieves MSR optimality for much lower values of $\alpha$, at the cost of losing bandwidth adaptivity. Indeed the MSR code in \cite{ExplicitMSR_nearOptimal} works only for $d=n-1$. We will review the related works in more details in Section \ref{Sec_Related_Works}.

In this work we will address the design problem of MSR codes with bandwidth adaptive exact repair for small $\alpha$, and $\beta$, following the Product Matrix framework \cite{PM_Codes}. For any positive integer design parameter $\delta$, the presented coding scheme allows us to choose the number of helper nodes for each repair scenario from the set $\{2(k-1), \cdots, (\delta+1)(k-1)\}$, based on availability of the nodes, network traffic and load state of the nodes. For any choice of the number of helpers, the per-node repair bandwidth, $\beta$, takes its optimum value based on (\ref{eq_betaMSR}). To this end we change the design of the \emph{``message matrix"} for the PM MSR codes introduced in \cite{PM_Codes} to enable its partitioning into smaller symmetric submatices in many different ways. We then also design appropriate repair and reconstruction schemes using a successive cancellation mechanism, to enable performing repair based on different choices for the number of helpers, while keeping the required contribution of each helper at the optimal value. 

Product Matrix codes are practical regenerating codes due to their small subpacketization and repair bandwidth requirements. Our focus in this work is to generalize the PM MSR codes to enable bandwidth adaptive repair, while keeping small practical values for the parameters $\alpha$, and $\beta$. For the construction presented in this work the required values for $\alpha$ and $\beta$ are linear in terms of the number of nodes in the DSS, which results in an exponential reduction compared to the constructions proposed in \cite{ExplicitMSR}. However, unlike the constructions in \cite{ExplicitMSR} which achieve any coding rate $k/n$, it should be mentioned that the coding rate in our construction is limited to $k/n \leq 1/2$. The main contributions of this work are explained in the next section, after formally defining the problem setup.

The rest of this paper is organized as follows: The following section formally introduced the problem setup and summarizes the main contributions. Section \ref{Sec_Related_Works} briefly reviews the most relevant works in the literature. Coding scheme and examples are presented in Section \ref{Sec_Coding_Scheme}, which is followed by a discussion on the properties of the code in Section \ref{Sec_Discussion}. Conclusion and appendix are provided at the end.

\section{Model and Main Results}

\subsection{Model}
In this section we will briefly introduce a setup for the distributed storage system and the coding scheme of our interest. This model is a modified version of the original setup considered in \cite{Regenerating}, and is very similar to the model considered in \cite{BAER_ISIT16}.

The first element we consider for the model of our bandwidth adaptive distributed storage system is a predefined Galois filed alphabet, $\mathbb{F}_{q}$ of size $q$. Hereafter we assume all the symbols stored or transmitted through the network are elements of $\mathbb{F}_q$. Besides, we consider a homogeneous group of $n$ storage nodes, each capable of storing $\alpha$ symbols.

\begin{defn}[Bandwidth Adaptive Regenerating Code]
Consider parameters $\alpha$, $n$, $k$, $\delta$, a set $D=\{d_{1}, \cdots, d_{\delta}\}$, with $d_{1}<\cdots<d_{\delta}$, and a \emph{total repair bandwidth function} $\gamma:D\rightarrow [\alpha,\infty)$. A bandwidth adaptive regenerating code $\mathcal{C}(n$, $k$, $D$, $\alpha$, $\gamma)$ is a regenerating code with subpacketization level $\alpha$, such that following mechanisms are guaranteed.
\begin{itemize}
\item \emph{Repair:} In each repair process the number of helpers, $d$, can be chosen arbitrarily from the set $D$. The choice of helper nodes is also arbitrary, and each of the chosen helpers then provides $\beta(d) = \gamma(d) / d$ repair symbols.
\item \emph{Data Reconstruction:} The data collector recovers the whole source data by accessing any arbitrary set of $k$ nodes.
\end{itemize}
\end{defn}

Note that the flexibility of the repair procedure depends on the parameter $\delta$, such that for a larger $\delta$, there are more options to select the number of helpers. In general, it is appealing to have small choices such as $d_{1}$, to guarantee the capability of code to perform repair when the number of available helpers is small, and also large choices such as $d_{\delta}$, to provide the capability of reducing the total as well as per-node repair bandwidth, and hence the transmission delay, whenever a larger number of helpers are available. The coding scheme we present in this work allows to design such a range for the elements in $D$. 
%

\begin{defn}[Total Storage Capacity]
For a set of parameters $\alpha$, $n$, $k$, $\delta$, a set $D=\{d_{1}, \cdots, d_{\delta}\}$, and a given function $\gamma:D \rightarrow [\alpha,\infty)$, the \emph{total storage capacity} of a bandwidth adaptive regenerating code, $\mathcal{C}(n$, $k$, $D$, $\alpha$, $\gamma)$, is the maximum size of a file that could be stored in a network of $n$ storage nodes with subpacketization level $\alpha$, using a bandwidth adaptive regenerating code $\mathcal{C}(n$, $k$, $D$, $\alpha$, $\gamma)$. We will denote the storage capacity of such a system by $F(n$, $k$, $D$, $\alpha$, $\gamma)$, or simply $F$, when the parameters could be inferred from the context.
\end{defn}

\begin{defn}[Bandwidth Adaptive MSR Codes, and Flexibility Degree]
For any choice of parameters $\alpha$, $n$, $k$, $\delta$, and set $D=\{d_{1}, \cdots, d_{\delta}\}$, the bandwidth adaptive regenerating codes that realize both the MDS property defined by 
\begin{align}\label{eq_MDSProperty}
F(n, k, D, \alpha, \gamma) = k \alpha,
\end{align}
as well as the the MSR characteristic equation simultaneously for all $d \in D$, given as
\begin{align}\label{eq_BWAMSR}
\alpha = (d-k+1)\beta(d), ~~ \forall{d \in D},
\end{align}
are referred to as \emph{bandwidth adaptive MSR} codes. Moreover, the number of elements in the set $D$ is referred to as \emph{flexibility degree} of the code, and is denoted by $\delta$.
\end{defn}

\begin{rmrk}
It is worth mentioning that in a bandwidth adaptive MSR code, the total repair bandwidth,
\begin{align}
d\beta(d) = \frac{d\alpha}{d-k+1}, \nonumber
\end{align}
is a decreasing function of $d$. As a result, naive time-sharing between multiple MSR code components fails to satisfy optimality in a bandwidth adaptive setting.
\end{rmrk}

\subsection{Main Results}

The main contribution of this work is to provide a bandwidth adaptive MSR coding scheme with small subpacketization level, and field size requirement. This coding scheme also guarantees exact repair of any failed node with many different choices of the number of helpers. This result is formally stated in the following theorem. In this paper $\mathrm{lcm}()$ denotes the least common multiple.

\begin{thm}\label{Thm_main}
For arbitrary positive integers $n$, $k$, and $\delta$, there exists an adaptive bandwidth MSR code over a Galois field $\mathbb{F}_{q}$ of size $q \geq n$, with subpacketization level $\alpha$ and total storage capacity $F$, satisfying 
\begin{align}\label{eq_alpha_Thm}
\alpha = (k-1)\mathrm{lcm}\left(1,2,\cdots,\delta \right), ~~~~ F = k \alpha,
\end{align}
which is capable of performing exact repair using any $d_{i}$ helpers, for 
\begin{align}
d_{i}=(i+1)(k-1),~i \in \{1, \cdots, \delta\}, \nonumber
\end{align}
and simultaneously satisfies the MSR characteristic equation (\ref{eq_BWAMSR}) for all $d_{i}$. \emph{i.e.,}
\begin{align}
\beta(d_{i}) = \frac{\alpha}{(d_{i}-k+1)},~ i \in \{1, \cdots, \delta\}. \nonumber
\end{align}
\end{thm}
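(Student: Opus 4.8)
The plan is to stay within the Product Matrix framework of \cite{PM_Codes} but to replace its two-block message matrix by a larger \emph{block-symmetric} matrix that admits $\delta$ different symmetric partitions, one matched to each helper count $d_i$. Write $s=\mathrm{lcm}(1,2,\ldots,\delta)$, so that $\alpha=(k-1)s$ and, for mode $i$, $d_i-k+1=i(k-1)$ and $\beta(d_i)=\alpha/(i(k-1))=s/i$; this is an integer precisely because $i\mid s$, which is where the choice of $s$ is forced. First I would fix an encoding matrix $\Psi$ of size $n\times(\delta+1)(k-1)$ assembled from a Vandermonde matrix $\Phi$ over $\mathbb{F}_q$ with $q\ge n$ distinct nonzero evaluation points together with suitable diagonal scalings $\Lambda$, and store at node $m$ the length-$\alpha$ vector $\underline{\psi}_m^{\mathsf{T}}M$, where $\underline{\psi}_m^{\mathsf{T}}$ is the $m$-th row of $\Psi$. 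The message matrix $M$ has $\alpha=(k-1)s$ columns and its $F=k(k-1)s$ free symbols are packed into symmetric sub-blocks; the central design requirement is that, for every $i$, the columns of $M$ (equivalently, the coordinates of every stored vector) regroup into $s/i$ sub-blocks, each of width $i(k-1)$ and carrying a symmetric structure tailored to the parameter $d_i$. A short count then confirms $\alpha=(d_i-k+1)\beta(d_i)$ and $F=k\alpha$ as required.

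Next I would establish the data-reconstruction property. Fixing any $k$ nodes, their stored rows are $\underline{\psi}_m^{\mathsf{T}}M$, a total of $k\alpha=F$ symbols. As in \cite{PM_Codes}, I would show that the associated $k\times(\delta+1)(k-1)$ submatrix of $\Psi$, combined with the symmetry of the sub-blocks of $M$, lets the data collector first isolate the product of a Vandermonde block with each symmetric sub-block and then solve for the symmetric sub-blocks one at a time. This step reduces to the nonsingularity of Vandermonde and closely related structured matrices, guaranteed by the distinctness of the evaluation points, so that all $F$ message symbols are recovered.

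The heart of the argument is exact repair in every mode $i$. When node $f$ fails and $d_i=(i+1)(k-1)$ helpers are contacted, I would have each helper $m$ transmit, for each of the $s/i$ sub-blocks of the mode-$i$ partition, the single symbol obtained by projecting the corresponding segment of its stored row onto $\underline{\phi}_f$; this gives exactly $\beta(d_i)=s/i$ symbols per helper. For a fixed sub-block the $d_i$ collected symbols form an over-determined linear system in the $i(k-1)$ unknowns that describe node $f$'s content on that sub-block. I would then run a successive-cancellation decoder: using the block-symmetric structure, recover the contribution of one sub-block, subtract its interference---known by symmetry---from the remaining equations, and iterate across the $s/i$ sub-blocks; reassembling the per-sub-block solutions yields $\underline{\psi}_f^{\mathsf{T}}M$, which is precisely node $f$'s lost content, at the claimed optimal per-helper load.

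The main obstacle is that a single pair $(\Psi,M)$ must satisfy the reconstruction condition together with the repair condition for all $\delta$ modes simultaneously, each imposing its own family of nonsingularity requirements on structured submatrices, while the field size stays at only $q=n$. What makes the repair genuinely nontrivial is that for $i\ge2$ a sub-block of order $i(k-1)$ is repaired using only $(i+1)(k-1)<2\,i(k-1)$ downloaded symbols, i.e. with strictly fewer helper symbols than a standalone Product Matrix MSR block of the same order would require; it is precisely the extra $(k-1)$ over-determining equations per sub-block, exploited through successive cancellation, that close this gap. I therefore expect the demanding part to be arranging the symmetric layout of $M$ and the scalings in $\Lambda$ so that the cancellation interference terms in mode $i$ collapse exactly (rather than merely shrink) and so that every intermediate Vandermonde-with-diagonal system---over all modes and all failed nodes---is nonsingular over $\mathbb{F}_n$ at once; the dimension counts and the optimality of each $\beta(d_i)$ then follow directly from $\alpha=(k-1)s$ and $i\mid s$.
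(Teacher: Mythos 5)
Your plan follows the paper's construction essentially step for step: the same forced choice $s=\mathrm{lcm}(1,\ldots,\delta)$ so that $\beta(d_i)=s/i$ is an integer, a Product Matrix encoding against a message matrix built from symmetric $(k-1)\times(k-1)$ blocks arranged so that for each mode $i$ it decomposes into $s/i$ symmetric diagonal blocks of order $i(k-1)$ joined by connector blocks, per-segment projection repair symbols, and successive cancellation for both repair and reconstruction. Two details need fixing before the plan executes. First, the dimensions: since $M$ must have $s(k-1)$ columns and, for every $i\le\delta$, carry a block diagonal of \emph{square} symmetric blocks of order $i(k-1)$, it must have $(s+1)(k-1)$ rows, so $\Psi$ is $n\times(s+1)(k-1)$, not $n\times(\delta+1)(k-1)$; the two agree only for $\delta\le 2$. (The paper takes $\Psi$ to be a single $n\times(s+1)(k-1)$ generalized Vandermonde matrix, which is exactly your $[\Phi,\Lambda\Phi,\Lambda^2\Phi,\ldots]$ form with $\Lambda=\mathrm{diag}(e_j^{k-1})$ and $s+1$ blocks.) Relatedly, each helper's $j$-th repair symbol is the projection of its $j$-th segment onto the matching length-$i(k-1)$ segment of $\underline{\psi}_f$, not onto the length-$(k-1)$ vector $\underline{\phi}_f$. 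Second, the per-step system is not over-determined in the $i(k-1)$ unknowns of one sub-block: the $d_i=(i+1)(k-1)$ symbols collected in a step are inverted through a $d_i\times d_i$ generalized Vandermonde matrix to recover exactly $(i+1)(k-1)$ unknowns, namely $\underline{\psi}_f(j)M_j$ of length $i(k-1)$ together with a length-$(k-1)$ vector coming from the connector block $S_{2jm}$; the latter is simultaneously part of the lost content and the interference subtracted in the next step, which is precisely how the ``extra'' $(k-1)$ equations are spent. With these adjustments your outline coincides with the paper's proof of Theorem~\ref{Thm_main} via Theorems~\ref{Thm_DataReconstruction} and~\ref{Thm_BWARepair}.
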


We provide a constructive proof for this theorem based on introducing an explicit construction for a bandwidth adaptive MSR coding scheme in Section \ref{Sec_Coding_Scheme}.


\section{Related Works}\label{Sec_Related_Works}

The property of bandwidth adaptivity has been of interest in the literature of regenerating codes for the last few years. There has been a number of researchers who have addressed this problem under different settings. Some, such as \cite{Flexible_Regenerating, BWAdapt_MFR, BWAdapt_Kermarrec, BWAdapt_Opportunistic}, have considered functional repair, in which code design problem reduces to linear network coding, and mainly focused on the theoretical limitations and properties of a bandwidth adaptive regenerating codes. There are, however, a few other works which have considered exact repair as well \cite{Exact_AI_Asymptotic, BAER, BAER_ISIT16, ExplicitMSR}. In this section we briefly review these works and their relevance to the problem setup introduced in the previous section.

%

In the MBR case, \cite{BAER, BAER_ISIT16} addressed a similar setup. For given integers $d_{\min},~d_{\max}$, authors presented a bandwidth adaptive exact repair MBR regenerating code with $D=\{d_{\min}, d_{\min}+1, \cdots, d_{\max}\}$, which is simultaneously error resilient against up to a certain number of erroneous nodes in the network.

For the case of MSR, it was first in \cite{Flexible_Regenerating} that Shah \emph{et. al.} extended the original problem of regenerating code design, introduced in \cite{Regenerating_Conf, Regenerating}, to provide flexibility. In \cite{Flexible_Regenerating} the authors consider the number of participating helpers to be selected independently in each repair or data reconstruction procedure. Moreover, they also allow \emph{asymmetric} participation of helpers, as long as the per-node repair bandwidth of each helper, namely $\beta_{i},~i\in\{1,\cdots, d\}$, is less than a fixed upper bound, $\beta_{\max}$. While the setting considered by Shah \emph{et. al.} provides much more flexibility in repair and reconstruction, they show that the total repair bandwidth, $\gamma$, in their setting should satisfy 
\begin{align}
\gamma \geq \max\{\alpha-\beta_{\max},F\hspace{-2mm}\mod \alpha\} + \left\lfloor \frac{F}{\alpha} \right\rfloor \beta_{\max}, \nonumber
\end{align}
which is always larger than that of the original regenerating codes formulations, except for the MSR case, \emph{i.e.,} $F = k\alpha$, where both settings achieve the same total repair bandwidth. However, the coding scheme presented in \cite{Flexible_Regenerating} only guarantees functional repair.

in \cite{BWAdapt_MFR} Wang \emph{et. al.} considered a functional repair coding scheme with symmetric per-node repair bandwidth, which supports bandwidth adaptivity. Later \cite{BWAdapt_Kermarrec} also considered a similar setup, while in both these works the main focus is on derivation of the optimal trade-off between the storage overhead and repair bandwidth for the functional repair in a \emph{coordinated} setting, where more than one node failure is considered to be repaired together. Note, however, that none of these works address the exact repair with bandwidth adaptivity in regenerating codes. 

Aggrawal \emph{et. al.} \cite{BWAdapt_Opportunistic} also considered a functional repair setup to achieve bandwidth adaptivity in regenerating codes. They analysed the mean-time-to-Data-Loss (MTTDL) in the regenerating codes with and without bandwidth adaptivity. Their analysis is based on a birth-death process model in which the population of available storage node randomly changes with appropriately chosen rates. They showed that bandwidth adaptivity provides a significant gain in terms of MTTDL. 
%

When considering exact repair regenerating codes, Cadambe \emph{et. al.} were the first to address the bandwidth adaptivity \cite{Exact_AI_Asymptotic}. They present an interference alignment based regenerating coding scheme which is capable of performing exact repair with bandwidth adaptivity in the repair procedure. The code presented by Cadambe \emph{et. al.} is the first exact repair regenerating code with bandwidth adaptivity, however, their coding scheme only asymptotically achieves the MSR optimality, when $\alpha$ and $\beta$ tend to infinity with proper ratio. The importance of this result, however, is to show that bandwidth adaptivity could be implemented without extra cost in the optimal trade-off between the storage overhear and repair bandwidth, at least for the MSR codes, even when the exact repair is required. 

In \cite{ProgressiveEngagement}, a similar setup, referred to as \emph{progressive engagement}, is considered for regenerating codes with bandwidth adaptive exact repair, and two coding schemes are provided. While the two coding schemes both preserve the MDS property, none of them satisfy the MSR characteristic equation (\ref{eq_BWAMSR}) simultaneously for different choices of $d$. Another main difference between the progressive engagement setup and the one considered in this work is that the authors in \cite{ProgressiveEngagement} relax the property that \emph{any} subset of surviving nodes could be considered as helpers by assuming that all the remaining systematic nodes are always participating as helpers. Moreover, they require the set of available choices for the number of helpers, namely $D$, to be $D=\{k, k+1, \cdots, n-1\}$, while in our formulation, $D$ does not include all integers between $k$ and $n-1$.

The first explicit constructions for exact repair MSR codes with bandwidth adaptivity and finite subpacketization level were introduced in \cite{ExplicitMSR}. These constructions work for any arbitrary values of $k$ and $n$, and the set of choices for the number of helpers, $D$, could be designed to contain any value $d_{i}$ such that $k < d_{i} < n$. However, the required value for the subpacketization level, $\alpha$, in both constructions is still considerably huge. In particular, for a DSS with $n$ storage nodes, and the set $D=\{d_{1},\cdots,d_{\delta}\}$, constructions in \cite{ExplicitMSR} require
\begin{align}\label{eq_Ye_alpha}
\alpha = \left(\mathrm{lcm}\left(d_{1}-k+1, \cdots, d_{\delta}-k+1 \right)\right)^n.
\end{align}
Therefore, these constructions only achieve optimality for storage of contents which are exponentially large in terms of the number of storage nodes in the system.

In coding schemes suggested for practical networks, the number of nodes, $n$, is usually larger than 10 \cite{FB_Rashmi, EC_for_Azure}, and the required per-node storage should not exceed a few hundreds of megabytes. Indeed at the scale of multiple megabytes, the performance of the system is already limited by bandwidth of the network link and disk I/O \cite{HDFS_Guide, FB_Rashmi}. Unfortunately, the constructions presented in \cite{ExplicitMSR} still struggle to satisfy such choices. For instance the smallest realization of these codes for the setting $n = 14$, and $k = 10$ suggested for Facebook \cite{FB_Rashmi} requires storing more than $10^{15}$ encoded symbols in each storage node, which translates to several peta-bytes per node. Hence the problem of designing exact repair bandwidth adaptive regenerating code remains yet open for practical range of parameters.

In this work, as presented in Theorem \ref{Thm_main}, we address the exact repair bandwidth adaptive MSR code design problem with small subpacketization level. While both \cite{ProgressiveEngagement} and \cite{ExplicitMSR} follow the approach introduced in \cite{Permutation}, which is based on the design of parity check equations, in this work we follow the \emph{product matrix} (PM) framework introduced in \cite{PM_Codes}. Comparing (\ref{eq_Ye_alpha}) with (\ref{eq_alpha_Thm}), one could see that the presented scheme exponentially reduces the required values of $\alpha$ (and $\beta$). However, this scheme works only for $2k-2 \leq d_{i},~\forall{d_{i}\in D}$. As a result, the construction presented in this work only solves the problem in low coding rates. It is worth mentioning that in applications such as general-purpose storage arrays, providing high reliability and fast degraded reads are more important than maximizing the coding rate \cite{NetApp}. However, the design of high-rate bandwidth adaptive MSR codes with small $\alpha$ and $\beta$ still remains a challenging important problem for big data storage systems such as Hadoop.


\section{Coding Scheme}\label{Sec_Coding_Scheme}
In this section we introduce a bandwidth adaptive exact repair MSR coding scheme, which could be designed to provide any required flexibility degree with elements of the set $D$ evenly located between the smallest and the largest element, namely  $d_{1}$ and $d_{\delta}$. We describe the encoding and decoding schemes for storage, repair, and data reconstruction procedures for the adaptive bandwidth MSR code in the following subsections. This coding scheme is closely related to the product matrix MSR code introduced in \cite{PM_Codes}, and could be considered as an extension of the product matrix code that achieves bandwidth adaptivity. As mentioned in the previous section, we will assume all the source and encoded symbols, are elements of a Galois field of an appropriately large field size $q \geq n$, denoted by $\mathbb{F}_{q}$. Moreover, all the operations hereafter are considered to be filed operations of $\mathbb{F}_{q}$. We will refer to $\mathbb{F}_{q}$ as the \emph{code alphabet}.

In the design of the proposed coding scheme, we chose the design parameters, namely $k$, and the required flexibility degree $\delta$. All the other parameters of the code, including $\alpha$, $F$, $D=\{d_{1}, \cdots, d_{\delta} \}$, and $\beta(d_i)$ will be then determined based on $k$, and $\delta$ as follows. The subpacketization level is
\begin{align}\label{eq_alpha_design}
\alpha = (k-1)\cdot \mathrm{lcm}\left(1,2,\cdots, \delta \right).
\end{align}
Moreover, we have $F = k \alpha$, which satisfies the MDS property. Finally, for $D$ we have
\begin{align}\label{eq_D_design}
D=\{d_{1}, \cdots, d_{\delta}\},~~d_{i} = (i+1)(k-1), ~ i \in \{1, \cdots, \delta\}.
\end{align}
and for any $d_{i} \in D$, the associated per-node and total repair bandwidths denoted by $\beta(d_{i})$, and $\gamma(d_{i})$ respectively are
\begin{align}\label{eq_beta_design}
\beta(d_{i}) &= \frac{\alpha}{d_{i}-k+1} = \frac{\alpha}{i (k-1)}, ~~~~\gamma(d_{i}) &= d_{i}\beta(d_{i})=\frac{(i+1)\alpha}{i}.
\end{align}

\subsection{Coding for Storage}\label{SubSec_CodingScheme}
We begin the introduction of the coding scheme by describing the process of encoding the source symbols and deriving the encoded symbols to be stored in the storage nodes. Similar to the product matrix codes, the first step in encoding for storage in this scheme is to arrange the information symbols in a matrix, denoted by $M$, which we refer to hereafter as the \emph{message matrix}. Let
\begin{align}\label{eq_z_define}
z_{\delta}=\mathrm{lcm}\left(1, 2, \cdots, \delta \right).
\end{align}

The message matrix in our coding scheme is structured as follows,
\begin{align}\label{eq_M_define}
M = \left[\begin{array}{c c c c c c c c c c c}
S_{1} ~ & ~ S_{2} ~ & ~ O ~ & ~ O ~ & ~ O ~ & ~ O ~ & & & \cdots & & O \\
S_{2} ~ & ~ S_{3} ~ & ~ S_{4} ~ & ~ O ~ & ~ O ~ & ~ O ~ & & & \cdots & & O \\
O ~ & ~ S_{4} ~ & ~ S_{5} ~ & ~ S_{6} ~ & ~ O ~ & ~ O ~ & & & \cdots & & O \\
O ~ & ~ O ~ & ~ S_{6} ~ & ~ S_{7} ~ & ~ S_{8} ~ & ~ O ~ & & & \cdots & & O \\
\vdots ~ & & & & & & \ddots & & & & \vdots \\
O ~ & & & ~ \cdots & & & & O & S_{2z_{\delta}-4} & S_{2z_{\delta}-3} & S_{2z_{\delta}-2} \\
O ~ & & & ~ \cdots & & & & O & O & S_{2z_{\delta}-2} & S_{2z_{\delta}-1} \\
O ~ & & & ~ \cdots & & & & O & O & O & S_{2z_{\delta}}
\end{array}\right],
\end{align}
where, each $S_{i},~i \in \{1, \cdots, 2z_{\delta}\}$ is a symmetric $(k-1) \times (k-1)$ matrix filled with $k(k-1)/2$ source symbols, and $O$ is a $(k-1) \times (k-1)$ zero matrix. Therefore, $M$ has $(z_{\delta}+1)(k-1)$ rows and $z_{\delta}(k-1)$ columns. Note that the total number of distinct source symbols in the message matrix $M$ is
\begin{align}\label{eq_numel_M}
F = k(k-1)z_{\delta} = k \alpha.
\end{align}

\begin{example}\label{EX_first}
Consider the design parameters $k=3$, and $\delta = 2$. Therefore, from (\ref{eq_z_define}) we have $z_{\delta} = 2$, and using (\ref{eq_numel_M}), the maximum number of source symbols that we can arrange in the message matrix $M$ is $12$. Denoting the source symbols by $s_{1}, \cdots, s_{12}$, then we have
\begin{align}\label{eq_EX_DataMatrix}
M=\left[\begin{array}{c c}
S_{1} & S_{2} \\
S_{2} & S_{3} \\
O & S_{4} 
\end{array}\right] =\left[\begin{array}{c c}
 \vspace{2.5pt}\left[\begin{array}{c c}
s_{1}~ & ~s_{2} \\
s_{2}~ & ~s_{3} \end{array}\right] & \left[\begin{array}{c c} 
s_{4}~ & ~s_{5} \\
s_{5}~ & ~s_{6} \end{array}\right] \\ \vspace{2.5pt}
\left[\begin{array}{c c} 
s_{4}~ & ~s_{5} \\
s_{5}~ & ~s_{6} \end{array}\right] & \left[\begin{array}{c c} 
s_{7}~ & ~s_{8} \\
s_{8}~ & ~s_{9} \end{array}\right] \\
\left[\begin{array}{c c} 
0~~ & ~~0 \\
0~~ & ~~0 \end{array}\right] & \left[\begin{array}{c c} 
s_{10} & s_{11} \\
s_{11} & s_{12} \end{array}\right] 
\end{array} \right].
\end{align}
\end{example}

Once the message matrix is ready, the source encoder creates the vector of coded symbols for each of the $n$ storage nodes, by calculating the product of a node-specific coefficient vector and the message matrix. To describe this process, we first need the following definition.
\begin{defn}\label{def_GVM}[Generalized Vandermonde Matrix]
For distinct and non-zero elements $e_1,\cdots,e_m$ of $\mathbb{F}_q$, and some integer $c \geq 0$, a matrix $A_{m \times \ell}$ with entries 
\begin{align}
A_{i,j} = e_i^{c+j-1}, ~~ \text{for } i \in \{1,\cdots, m\},~j \in \{1, \cdots, \ell\}, \nonumber
\end{align}
is referred to as a \emph{generalized Vandermonde} matrix. 
\end{defn}
The following lemma about these matrices will be used in the proofs of the following theorems,

\begin{lem}\label{Lem_GVM}
Consider distinct and non-zero elements $e_{1}, \cdots, e_{\ell}$ in $\mathbb{F}_{q}$, and an integer $c \geq 0$. Then a square $\ell \times \ell$ generalized Vandermonde matrix, $A$, as defined in Definition \ref{def_GVM}, is invertible in $\mathbb{F}_{q}$.
\end{lem}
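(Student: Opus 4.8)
The plan is to prove invertibility by showing that the determinant of $A$ is nonzero, reducing the computation to a standard Vandermonde determinant via a simple row scaling. First I would observe that the $i$-th row of $A$ is $\left(e_i^{c}, e_i^{c+1}, \dots, e_i^{c+\ell-1}\right)$, which factors as the scalar $e_i^{c}$ times the geometric row $\left(1, e_i, e_i^2, \dots, e_i^{\ell-1}\right)$. Using multilinearity of the determinant in the rows, I would pull the factor $e_i^{c}$ out of each row $i$, obtaining
\begin{align}
\det(A) = \left(\prod_{i=1}^{\ell} e_i^{c}\right)\det(V), \nonumber
\end{align}
where $V$ is the ordinary Vandermonde matrix with entries $V_{i,j} = e_i^{j-1}$ for $i,j \in \{1,\dots,\ell\}$.

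Next I would invoke the classical Vandermonde determinant identity, which holds over any field,
\begin{align}
\det(V) = \prod_{1 \le i < j \le \ell}(e_j - e_i). \nonumber
\end{align}
Since the elements $e_1,\dots,e_\ell$ are assumed distinct, every factor $e_j - e_i$ with $i < j$ is nonzero, so $\det(V) \neq 0$. Likewise, since each $e_i$ is nonzero and $\mathbb{F}_q$ is a field (hence an integral domain), the product $\prod_{i=1}^{\ell} e_i^{c}$ is a product of nonzero elements and is therefore nonzero. Combining these two facts, $\det(A)$ is a product of nonzero elements of $\mathbb{F}_q$ and is thus nonzero, which means $A$ is invertible over $\mathbb{F}_q$.

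There is essentially no serious obstacle here; the argument is a routine factorization followed by the Vandermonde identity. The only point that deserves explicit attention is that all reasoning must be carried out inside the finite field $\mathbb{F}_q$ rather than over the reals, but this causes no difficulty: multilinearity of the determinant, the Vandermonde formula, and the fact that a product of nonzero elements is nonzero all remain valid because $\mathbb{F}_q$ is a field. For completeness one could also supply a short self-contained justification of the Vandermonde identity (for instance, by induction on $\ell$ using column operations), so that the lemma does not rely on an external citation, but this is optional and does not change the structure of the proof.
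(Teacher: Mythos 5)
Your proof is correct and follows essentially the same route as the paper: the row scaling by $e_i^{c}$ is exactly the paper's factorization $A = \mathrm{diag}(e_1^c,\dots,e_\ell^c)\,V$ into a nonsingular diagonal matrix times an ordinary Vandermonde matrix. The only difference is cosmetic — you phrase it through determinants and spell out the Vandermonde identity, while the paper simply cites the invertibility of both factors.
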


The proof of the above lemma, simply follows from the fact that,
\begin{align}
A = \left[\begin{array}{c c c c c}
e_{1}^c~ & ~0~ & \cdots & ~0 \\
0~ & ~e_{2}^c~ & \cdots & ~0 \\
\vdots & ~ ~ & \ddots & \vdots\\
0~ & ~0~  & \cdots & ~e_{\ell}^c \\
\end{array} \right] \left[\begin{array}{c c c c c}
1~ & ~e_{1}~ & ~e_{1}^{2}~ & \cdots & ~e_{1}^{\ell-1} \\
1~ & ~e_{2}~ & ~e_{2}^{2}~ & \cdots & ~e_{2}^{\ell-1} \\
~ & ~~ & \vdots & ~~ & ~ \\
1~ & ~e_{\ell}~ & ~e_{\ell}^{2}~ & \cdots & ~e_{\ell}^{\ell-1}
\end{array}\right], \nonumber
\end{align}
and the two matrices on the right hand side are both full-rank, as one of them is diagonal with non-zero diagonal elements and the other one is a square Vandermonde matrix.

Back to the description of our coding scheme, for distinct and non-zero elements $e_{i}$'s in $\mathbb{F}_{q}$, with $i\in \{1, \cdots, n\}$ we set $c=0$, and define a generalized Vandermonde matrix of size $n\times (z_{\delta}+1)(k-1)$ as
\begin{align}
\Psi = \left[\begin{array}{c c c c c}
1~ & ~e_{1}~ & ~e_{1}^{2}~ & \cdots & ~e_{1}^{(z_{\delta}+1)(k-1)-1} \\
1~ & ~e_{2}~ & ~e_{2}^{2}~ & \cdots & ~e_{2}^{(z_{\delta}+1)(k-1)-1} \\
~ & ~~ & ~~ & \vdots & ~ \\
1~ & ~e_{n}~ & ~e_{n}^{2}~ & \cdots & ~e_{n}^{(z_{\delta}+1)(k-1)-1}
\end{array}\right]. \nonumber
\end{align}
Note that all of the submatrices of $\Psi$ are also generalized Vandermonde matrices. We refer to $\Psi$ as the \emph{coefficient matrix} and denote the $j^{\text{th}}$ row of $\Psi$ by $\underline{\psi}_{j}$. The vector of encoded symbols to be stored on node $j,~j \in \{1, \cdots, n\}$, denoted by $\underline{x}_{j}$, is calculated as
\begin{align}\label{eq_StorageCoding}
\underline{x}_{j}=\underline{\psi}_{j}M.
\end{align}

The vector $\underline{\psi}_{j}$ is the node-specific coefficient vector for storage node $j$. Note that the per-node storage capacity requirement for this coding scheme is then $z_{\delta}(k-1)$ as given by (\ref{eq_alpha_design}).

\begin{example}\label{EX_storage}
Let's consider the setting in Example \ref{EX_first} again, and assume we have $n=7$ nodes in the network. Assume that the code alphabet is the Galois field $\mathbb{F}_{11}$. The coefficient matrix could be formed based on $(e_{1}, e_{2}, \cdots, e_{7})=$ $(1,2, \cdots, 7)$ as follows,
\begin{align}\label{eq_EX_Psi}
\Psi = \left[\begin{array}{c c c c c c}
1~ & ~1~ & ~1~ & ~1~ & ~1~ & ~1 \\
1~ & ~2~ & ~4~ & ~8~ & ~5~ & ~10 \\
1~ & ~3~ & ~9~ & ~5~ & ~4~ & ~1 \\
1~ & ~4~ & ~5~ & ~9~ & ~3~ & ~1 \\
1~ & ~5~ & ~3~ & ~4~ & ~9~ & ~1 \\
1~ & ~6~ & ~3~ & ~7~ & ~9~ & ~10 \\
1~ & ~7~ & ~5~ & ~2~ & ~3~ & ~10
\end{array}\right].
\end{align}

The encoded content of nodes 1 to 7 can be calculated using (\ref{eq_EX_DataMatrix}) and (\ref{eq_StorageCoding}). For instance, for node 1 we have
\begin{align}
\underline{x}_{1} = [x_{1,1}, x_{1,2}, x_{1,3}, x_{1,4}] = [1, 1, 1, 1, 1, 1] M, \nonumber
\end{align}
which gives
\begin{align}
x_{1,1} &= s_{1}+s_{2}+s_{4}+s_{5}, \nonumber \\
x_{1,2} &= s_{2}+s_{3}+s_{5}+s_{6}, \nonumber \\
x_{1,3} &= s_{4}+s_{5}+s_{7}+s_{8}+s_{10}+s_{11}, \nonumber \\
x_{1,4} &= s_{5}+s_{6}+s_{8}+s_{9}+s_{11}+s_{12}, \nonumber
\end{align}
and similarly for node 7 we have
\begin{align}
\underline{x}_{7} = [x_{7,1}, x_{7,2}, x_{7,3}, x_{7,4}] = [1, 7, 5, 2, 3, 10] M, \nonumber
\end{align}
which gives
\begin{align}
x_{7,1} &= s_{1}+7s_{2}+5s_{4}+2s_{5}, \nonumber \\
x_{7,2} &= s_{2}+7s_{3}+5s_{5}+2s_{6}, \nonumber \\
x_{7,3} &= s_{4}+7s_{5}+5s_{7}+2s_{8}+3s_{10}+10s_{11}, \nonumber \\
x_{7,4} &= s_{5}+7s_{6}+5s_{8}+2s_{9}+3s_{11}+10s_{12}. \nonumber
\end{align}
\end{example}

\subsection{Data Reconstruction}

In order to reconstruct all the information stored in the system, the data collector accesses $k$ arbitrary nodes in the network and downloads all their contents. To describe the details of the decoding we use the following lemma.

\begin{lem}\label{Lem_PMreconstruction}
Let $X$ and $\Phi$ be two known matrices of size $k \times (k-1)$, such that $\Phi$ is a generalized Vandermonde matrix, and assume $\Delta$ is a known diagonal matrix of size $k \times k$, with distinct and non-zero diagonal elements. Then the equation 
\begin{align}\label{eq_reconstruction_step}
X = \Phi A + \Delta \Phi B,
\end{align}
is uniquely solvable for unknown $(k-1) \times (k-1)$ symmetric matrices $A$ and $B$.
\end{lem}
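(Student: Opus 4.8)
The plan is to adapt the reconstruction argument of the original product matrix construction to the generalized Vandermonde setting guaranteed by Lemma \ref{Lem_GVM}, the central device being to \emph{symmetrize} the equation by right-multiplying with $\Phi^{T}$. Writing $\underline{\phi}_{i}$ for the $i$-th row of $\Phi$ and $\lambda_{i}$ for the $i$-th diagonal entry of $\Delta$, I would first form the $k \times k$ matrix $X\Phi^{T} = \Phi A \Phi^{T} + \Delta\,\Phi B \Phi^{T}$ and introduce $P = \Phi A \Phi^{T}$ and $Q = \Phi B \Phi^{T}$. Because $A$ and $B$ are symmetric, both $P$ and $Q$ are symmetric $k\times k$ matrices, and the governing identity reads $(X\Phi^{T})_{ij} = P_{ij} + \lambda_{i} Q_{ij}$ for every pair $i,j$, where the left-hand side is fully known.

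The next step extracts the off-diagonal entries of $P$ and $Q$. For a fixed pair $i \neq j$, comparing the $(i,j)$ and $(j,i)$ entries and invoking $P_{ij}=P_{ji}$, $Q_{ij}=Q_{ji}$ yields a $2\times 2$ linear system in the unknowns $P_{ij}, Q_{ij}$ whose coefficient matrix has rows $(1,\lambda_{i})$ and $(1,\lambda_{j})$, and hence determinant $\lambda_{j}-\lambda_{i}$. This determinant is nonzero precisely because $\Delta$ has distinct diagonal entries, so each off-diagonal pair $(P_{ij},Q_{ij})$ is uniquely determined. The one genuine obstacle appears here: the $2\times 2$ trick degenerates on the diagonal, so the entries $P_{ii},Q_{ii}$ are \emph{not} recoverable, and the argument must be completed using off-diagonal data alone. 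It is exactly the symmetry of $A$ and $B$ that makes $P,Q$ symmetric and lets this $2\times 2$ system decouple $P$ from $Q$, so symmetry is what circumvents the diagonal obstruction rather than being a mere convenience.

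To finish, I would recover $A$ (and identically $B$) column by column from $A\Phi^{T}$. Fixing $j$, the now-known scalars $P_{ij} = \underline{\phi}_{i}\, A\, \underline{\phi}_{j}^{T}$ for all $i \neq j$ constitute $k-1$ linear equations in the $k-1$ unknown entries of the column vector $A\underline{\phi}_{j}^{T}$; the coefficient matrix is $\Phi$ with its $j$-th row deleted, which is a $(k-1)\times(k-1)$ generalized Vandermonde matrix and therefore invertible by Lemma \ref{Lem_GVM}. Solving this system for each $j \in \{1,\dots,k\}$ produces the entire $(k-1)\times k$ matrix $A\Phi^{T}$; restricting to any $k-1$ of its columns gives $A$ multiplied by an invertible generalized Vandermonde block, which I invert once more via Lemma \ref{Lem_GVM} to isolate $A$. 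Since every linear system solved along the way has a unique solution, the pair $(A,B)$ is uniquely determined, which establishes the claim.
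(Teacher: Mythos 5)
Your proposal is correct and follows essentially the same route as the paper's proof in Appendix A: symmetrize by right-multiplying with $\Phi^{\intercal}$, decouple the off-diagonal entries of $P=\Phi A\Phi^{\intercal}$ and $Q=\Phi B\Phi^{\intercal}$ via the $2\times 2$ systems made nonsingular by the distinct diagonal entries of $\Delta$, then invert $(k-1)\times(k-1)$ generalized Vandermonde submatrices twice to isolate $A$ and $B$. The only cosmetic difference is that you extract $A\underline{\phi}_{j}^{\intercal}$ column by column while the paper extracts $\underline{\phi}_{i}A$ row by row, which by the symmetry of $P$ and $Q$ is the identical computation.
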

 
The proof of this lemma is based on the data reconstruction scheme of the product matrix MSR codes, introduced in \cite{PM_Codes}, and is provided in Appendix \ref{APP_Reconstruction} to help keeping this paper self-contained. The following theorem explains the data reconstruction procedure in this coding scheme.

\begin{thm}\label{Thm_DataReconstruction}
For the coding scheme presented in subsection \ref{SubSec_CodingScheme}, there exists a decoding scheme to reconstruct all the source symbols arranged in the message matrix $M$ from the encoded content of any arbitrary group of $k$ storage nodes.
\end{thm}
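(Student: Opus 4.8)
The plan is to reduce the reconstruction to repeated applications of Lemma~\ref{Lem_PMreconstruction} through a successive cancellation that sweeps across the block-columns of $M$. First I would set up the bookkeeping. The data collector downloads the full content of $k$ nodes, obtaining the $k \times z_{\delta}(k-1)$ matrix $X = \Psi_{DC} M$, where $\Psi_{DC}$ is the submatrix of $\Psi$ formed by the $k$ accessed rows. Since each node stores $\alpha = z_{\delta}(k-1)$ symbols, $X$ carries exactly $k\alpha = F$ field symbols, so it is enough to show that the linear map $M \mapsto \Psi_{DC}M$ is injective, i.e.\ that every block $S_{i}$ is recoverable. Writing $\Lambda = \mathrm{diag}(e_{i_1},\dots,e_{i_k})$ for the accessed evaluation points and $\Phi_{0}$ for the $k\times(k-1)$ generalized Vandermonde block built from those points, the offset-zero structure of $\Psi$ makes its $(m+1)$-th column block equal to $\Phi_{m} = \Lambda^{m(k-1)}\Phi_{0}$, so all blocks are powers of $\Lambda^{k-1}$ applied to a single base Vandermonde.

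Next I would decompose $X$ into its $z_{\delta}$ block-columns $X_{1},\dots,X_{z_{\delta}}$, each of size $k\times(k-1)$, and align them with the shifted block-tridiagonal pattern of $M$. Block-column $c$ of $M$ carries $S_{2c-2}$ in block-row $c-1$, $S_{2c-1}$ in block-row $c$, and $S_{2c}$ in block-row $c+1$ (with the $S_{2c-2}$ term absent when $c=1$), which gives
\[
X_{c} = \Lambda^{(c-2)(k-1)}\Phi_{0} S_{2c-2} + \Lambda^{(c-1)(k-1)}\Phi_{0} S_{2c-1} + \Lambda^{c(k-1)}\Phi_{0} S_{2c}.
\]
Multiplying through by $\Lambda^{-(c-1)(k-1)}$ normalizes the middle term and turns the right-hand side into $\Phi_{0} S_{2c-1} + \Lambda^{k-1}\Phi_{0} S_{2c}$ plus a single contribution coming from $S_{2c-2}$.

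The successive cancellation then runs by induction on $c$. For the base case $c=1$ there is no $S_{0}$ term, so $X_{1} = \Phi_{0} S_{1} + \Lambda^{k-1}\Phi_{0} S_{2}$ is exactly the form treated in Lemma~\ref{Lem_PMreconstruction}, with $\Phi=\Phi_{0}$, $\Delta=\Lambda^{k-1}$, $A=S_{1}$, and $B=S_{2}$; the lemma recovers the symmetric matrices $S_{1}$ and $S_{2}$. For the inductive step, assuming $S_{2c-2}$ is already known from step $c-1$, I subtract its now fully determined contribution from the normalized $X_{c}$, leaving $\Phi_{0} S_{2c-1} + \Lambda^{k-1}\Phi_{0} S_{2c}$, and invoke Lemma~\ref{Lem_PMreconstruction} once more to obtain $S_{2c-1}$ and $S_{2c}$. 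Iterating from $c=1$ to $c=z_{\delta}$ recovers all $2z_{\delta}$ symmetric blocks $S_{1},\dots,S_{2z_{\delta}}$, hence all of $M$, which establishes the theorem.

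The main obstacle is verifying that the hypotheses of Lemma~\ref{Lem_PMreconstruction} genuinely hold at every step and for \emph{every} choice of $k$ helpers. The Vandermonde requirement on $\Phi_{0}$ is immediate, since any $k$ rows of $\Psi$ restrict to a generalized Vandermonde. The delicate point is the diagonal matrix $\Delta = \Lambda^{k-1}$: its entries $e_{i_1}^{k-1},\dots,e_{i_k}^{k-1}$ must be nonzero (automatic) and \emph{pairwise distinct} for an arbitrary $k$-subset of nodes. This forces the evaluation points $e_{1},\dots,e_{n}$ to be chosen so that $e_{1}^{k-1},\dots,e_{n}^{k-1}$ are themselves distinct, a condition that ultimately governs the admissible field size. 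I would single this out as the step demanding the most care, because the use of a single generalized Vandermonde $\Psi$ (rather than the two independently chosen blocks of the original product-matrix MSR code) removes the freedom to select the scaling diagonal separately from the Vandermonde points.
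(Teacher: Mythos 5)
Your proof is essentially identical to the paper's: the same partition of $\Psi_{\text{DC}}$ and $X_{\text{DC}}$ into $(k-1)$-wide column blocks, the same relation $\Psi_{\text{DC}}(i+1)=\Lambda_{\text{DC}}\Psi_{\text{DC}}(i)$, and the same left-to-right successive cancellation in which step $i$ subtracts the already-known $S_{2(i-1)}$ contribution and invokes Lemma~\ref{Lem_PMreconstruction} to recover $S_{2i-1}$ and $S_{2i}$; your normalization by $\Lambda^{-(c-1)(k-1)}$ is only a cosmetic variant of the paper's choice to apply the lemma with $\Phi=\Psi_{\text{DC}}(i)$ directly. The caveat you single out at the end is genuine and is glossed over in the paper: Lemma~\ref{Lem_PMreconstruction} requires the diagonal entries $e_{\ell_1}^{k-1},\dots,e_{\ell_k}^{k-1}$ to be pairwise distinct for \emph{every} $k$-subset of nodes, i.e.\ $e\mapsto e^{k-1}$ must be injective on $\{e_1,\dots,e_n\}$, which does not follow from $q\ge n$ alone --- the paper's own running example ($\mathbb{F}_{11}$, $e_i=i$, $k=3$) violates it, since $5^2\equiv 6^2\equiv 3$ and $4^2\equiv 7^2\equiv 5\pmod{11}$, so reconstruction from, say, nodes $\{1,5,6\}$ would fail as written. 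Your instinct to flag that hypothesis as the delicate step is correct; the evaluation points (or the field size) must be chosen so that the $(k-1)$-th powers are all distinct.
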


\begin{proof}
Let's assume the set of accessed nodes is $\{\ell_{1}, \cdots, \ell_{k}\}$. Moreover, let's denote the $k\times (z_{\delta}+1)(k-1)$ submatrix of $\Psi$ associated with the nodes $\ell_{1}, \cdots, \ell_{k}$, by $\Psi_{\text{DC}}$. We will further denote the submatrix of $\Psi_{\text{DC}}$ consisting of columns $(i-1)(k-1)+1$ through $i (k-1)$, by $\Psi_{\text{DC}}(i)$. In other words, we have a partitioning of $\Psi_{\text{DC}}$'s columns as
\begin{align}\label{eq_Psi_DC_partitioning}
\Psi_{\text{DC}} = \left[\begin{array}{c}
\underline{\psi}_{\ell_{1}} \\
\vdots \\
\underline{\psi}_{\ell_{k}}
\end{array} \right] = \left[\Psi_{\text{DC}}(1), \cdots, \Psi_{\text{DC}}(z_{\delta}+1) \right]. 
\end{align}
As a result, defining the diagonal matrix
\begin{align}
\Lambda_{\text{DC}} = \left[\begin{array}{c c c c c}
e_{\ell_{1}}^{(k-1)}~ & ~0~ & ~0~ & \cdots & ~0 \\
0~ & ~e_{\ell_{2}}^{(k-1)}~ & ~0~ & \cdots & ~0 \\
\vdots ~& ~ & ~ & \ddots & ~\vdots \\
0~ & ~0~ & ~0~ & \cdots & ~e_{\ell_{k}}^{(k-1)} 
\end{array} \right], \nonumber
\end{align}
for each $k\times (k-1)$ submatrix $\Psi_{\text{DC}}(i)$ we have
\begin{align}\label{eq_PsiDC}
\Psi_{\text{DC}}(i+1) = \Lambda_{\text{DC}}\Psi_{\text{DC}}(i).
\end{align}
Similarly, let's denote the matrix consisting of the collected encoded vectors by $X_{\text{DC}}$, and its partitioning to $k\times (k-1)$ submatrices $X_{\text{DC}}(i)$, $i\in\{1,\cdots,z_{\delta}\}$ as follows
\begin{align}\label{eq_X_DC_partitioning}
X_{\text{DC}} = \left[\begin{array}{c}
\underline{x}_{\ell_{1}} \\
\vdots \\
\underline{x}_{\ell_{k}}
\end{array} \right] = \left[ X_{\text{DC}}(1), \cdots, X_{\text{DC}}(z_{\delta}) \right].
\end{align}

The decoding procedure for data reconstruction consists of $z_{\delta}$ consecutive steps. The first step uses only the submatrix $X_{\text{DC}}(1)$. Using (\ref{eq_PsiDC}) we have,
\begin{align}
X_{\text{DC}}(1) &= \left[\Psi_{\text{DC}}(1), \Psi_{\text{DC}}(2) \right]\left[\begin{array}{c}
S_{1} \\
S_{2}
\end{array}\right] \nonumber \\ 
&= \Psi_{\text{DC}}(1)S_{1} + \Lambda_{\text{DC}}\Psi_{\text{DC}}(1)S_{2}. \nonumber
\end{align}

Using Lemma \ref{Lem_PMreconstruction}, the decoder recovers both $S_{1}$, and $S_{2}$, using $X_{\text{DC}}(1)$, in step one. Then, for $i\in\{2, \cdots, z_{\delta}\}$, the decoder performs step $i$ by using submatrix $X_{\text{DC}}(i)$, and decodes submatrices $S_{2i-1}$, and $S_{2i}$, as follows. 

In step $i,~i\in\{2, \cdots, z_{\delta}\}$, of the data reconstruction decoding, the decoder uses the submatrix $X_{\text{DC}}(i)$. Note that 
\begin{align}
X_{\text{DC}}(i) &= \left[\Psi_{\text{DC}}(i-1), \Psi_{\text{DC}}(i), \Psi_{\text{DC}}(i+1) \right]\left[\begin{array}{c}
S_{2(i-1)} \\
S_{2i-1} \\
S_{2i}
\end{array}\right] \nonumber \\ 
&= \Psi_{\text{DC}}(i-1)S_{2(i-1)} + \left[\Psi_{\text{DC}}(i), \Psi_{\text{DC}}(i+1) \right]\left[\begin{array}{c}
S_{2i-1} \\
S_{2i}
\end{array}\right]. \nonumber
\end{align}
Having the submatrix $S_{2(i-1)}$ already recovered from step $i-1$, the decoder first calculates
\begin{align}\label{eq_X_i_calc}
\hat{X}_{\text{DC}}(i) &= X_{\text{DC}}(i) - \Psi_{\text{DC}}(i-1) S_{2(i-1)}. \nonumber \\
&= \left[\Psi_{\text{DC}}(i), \Psi_{\text{DC}}(i+1) \right]\left[\begin{array}{c}
S_{2i-1} \\
S_{2i}
\end{array}\right].
\end{align}
Then from (\ref{eq_PsiDC}), and (\ref{eq_X_i_calc}), we have
\begin{align}
\hat{X}_{\text{DC}}(i) = \Psi_{\text{DC}}(i) S_{2i-1} + \Lambda_{\text{DC}}\Psi_{\text{DC}}(i) S_{2i}. \nonumber
\end{align}
Again using Lemma \ref{Lem_PMreconstruction}, decoder recovers $S_{2i-1}$, and $S_{2i}$ at the end of the step $i$ of the decoding. Therefore, by finishing step $z_{\delta}$, the decoder reconstructs all the submatrices in the message matrix $M$, and recovers the whole data stored in the network.
\end{proof}

\begin{example}\label{EX_DCnotation}
Following the setting described in Example \ref{EX_first} and Example \ref{EX_storage}, we have $k=3$, and the coefficient matrix $\Psi$ is given in (\ref{eq_EX_Psi}). Let's assume the data collector accesses the storage nodes 1, 2, and 4. Then, with $z_{\delta}=2$, we have,
\begin{align}
\Psi_{\text{DC}} = \left[\Psi_{\text{DC}}(1), \Psi_{\text{DC}}(2), \Psi_{\text{DC}}(3)\right]=\left[\begin{array}{c c c c c c}
1~ & ~1~ & ~1~ & ~1~ & ~1~ & ~1 \\
1~ & ~2~ & ~4~ & ~8~ & ~5~ & ~10 \\
1~ & ~4~ & ~5~ & ~9~ & ~3~ & ~1
\end{array} \right], \nonumber
\end{align}
and,
\begin{align}
\Psi_{\text{DC}}(1) = \left[ \begin{array}{c c}
 1~ & ~1 \\
 1~ & ~2 \\
 1~ & ~4
\end{array}\right] \hspace{-1mm}, \Psi_{\text{DC}}(2) = \left[ \begin{array}{c c}
 1~ & ~1 \\
 4~ & ~8 \\
 5~ & ~9
\end{array}\right] \hspace{-1mm},  \Psi_{\text{DC}}(3) = \left[ \begin{array}{c c}
 1~ & ~1 \\
 5~ & ~10 \\
 3~ & ~1
\end{array}\right]. \nonumber
\end{align}

Moreover, we have
\begin{align}\label{eq_Lambda_DC}
\Lambda_{\text{DC}} = \left[\begin{array}{c c c}
 1^2~ & ~0~ & ~0 \\
 0~ & ~2^2~ & ~0 \\
 0~ & ~0~ & ~4^2
\end{array} \right] = \left[\begin{array}{c c c}
 1~ & ~0~ & ~0 \\
 0~ & ~4~ & ~0 \\
 0~ & ~0~ & ~5
\end{array} \right], 
\end{align}
and 
\begin{align}
X_{\text{DC}} = [X_{\text{DC}}(1), X_{\text{DC}}(2)] 
=\left[ \begin{array}{c}
\underline{x}_{1} \\
\underline{x}_{2} \\
\underline{x}_{4} 
\end{array} \right] = \left[\begin{array}{c c c c}
x_{1,1}~ & ~x_{1,2}~ & ~x_{1,3}~ & ~x_{1,4} \\
x_{2,1}~ & ~x_{2,2}~ & ~x_{2,3}~ & ~x_{2,4} \\
x_{4,1}~ & ~x_{4,2}~ & ~x_{4,3}~ & ~x_{4,4}
\end{array} \right], \nonumber
\end{align}
and finally,
\begin{align}\label{eq_EX_XDCs}
X_{\text{DC}}(1)=\left[ \begin{array}{c c}
x_{1,1}~ & ~x_{1,2} \\
x_{2,1}~ & ~x_{2,2} \\
x_{4,1}~ & ~x_{4,2}
\end{array}\right], X_{\text{DC}}(2)=\left[ \begin{array}{c c}
x_{1,3}~ & ~x_{1,4} \\
x_{2,3}~ & ~x_{2,4} \\
x_{4,3}~ & ~x_{4,4}
\end{array}\right].
\end{align}

The first step of decoding in the data reconstruction process based on the encoded data stored in nodes 1, 2 and 4, starts by using only $X_{\text{DC}}(1)$, as given in (\ref{eq_EX_XDCs}). Using Lemma \ref{Lem_PMreconstruction}, The decoder then recovers $S_{1}$, and $S_{2}$ submatrices of the message matrix $M$. 

In the second step then the decoder first calculates 
\begin{align}
\hat{X}_{\text{DC}}(2) = X_{\text{DC}}(2) - \Psi_{\text{DC}}(1)S_{2}, \nonumber
\end{align}
which is equal to 
\begin{align}
\hat{X}_{\text{DC}}(2) = \Psi_{\text{DC}}(2)S_{3} + \Lambda_{\text{DC}}\Psi_{\text{DC}}(2)S_{4}, \nonumber
\end{align}
and hence is of the desired form of (\ref{eq_reconstruction_step}). Therefore, again using lemma \ref{Lem_PMreconstruction}, the decoder recovers $S_{3}$, and $S_{4}$, which completes the data reconstruction.
\end{example}

Algorithm \ref{Alg_DataReconstruction} summarizes the data reconstruction mechanism in this coding scheme.
\begin{algorithm}\label{Alg_DataReconstruction}
\caption{Data Reconstruction}
\begin{algorithmic}[1]
\State Input: $k$, $z_{\delta}$, $\underline{x}_{\ell_{i}}$, $\underline{\psi}_{\ell_{i}}$, and $e_{\ell_{i}}$ for $i\in\{1, \cdots, k\}$.
\State Output: Submatrices $S_{1}, \cdots, S_{2z_{\delta}}$.
\State Form matrices $\Psi_{\text{DC}}(i)$, for $i\in\{1, \cdots, z_{\delta}+1\}$, using (\ref{eq_Psi_DC_partitioning}).
\State Form matrices $X_{\text{DC}}(i)$, for $i\in\{1, \cdots, z_{\delta}\}$, using (\ref{eq_X_DC_partitioning}).
\State Form matrix $\Lambda_{\text{DC}}$, using (\ref{eq_Lambda_DC}).
\State Recover submatrices $S_{1}$, $S_{2}$ from $X_{\text{DC}}(1)$, $\Psi_{\text{DC}}(1)$, and $\Lambda_{\text{DC}}$, using Lemma \ref{Lem_PMreconstruction}.
\For{$i = 2$ to $z_{\delta}$}
\State Calculate $\hat{X}_{\text{DC}}(i) = X_{\text{DC}}(i) - \Psi_{\text{DC}}(i-1)S_{2(i-1)}$.
\State Recover submatrices $S_{2i-1}$, $S_{2i}$ from $\hat{X}_{\text{DC}}(i)$, $\Psi_{\text{DC}}(i)$, and $\Lambda_{\text{DC}}$, using Lemma \ref{Lem_PMreconstruction}.
\EndFor
\end{algorithmic}
\end{algorithm}

\subsection{Bandwidth Adaptive Exact Repair}

We now describe the bandwidth adaptive repair procedure, by assuming that node $f$ is failed and the set of helpers selected for the repair are $\mathcal{H} = \{h_{1}, \cdots, h_{d}\}$, for some $d\in D$. The following theorem describes the repair procedure in this bandwidth adaptive MSR code.

\begin{thm}\label{Thm_BWARepair}
Consider the coding scheme presented in subsection \ref{SubSec_CodingScheme}, with design parameters $k$, and $\delta$, and $D$ as defined in (\ref{eq_D_design}). For any arbitrary failed node $f$, and any arbitrary set of helpers $\mathcal{H}=\{h_{1}, \cdots, h_{d}\}$, for some $d\in D$, there exists a repair scheme for recovering the content of node $f$ with per-node repair bandwidth, 
\begin{align}\label{eq_BWAbeta}
\beta(d) = \frac{\alpha}{d-k+1}.
\end{align}
\end{thm}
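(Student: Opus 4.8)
The plan is to reduce the repair to a sequence of small Product-Matrix-style inversions, one per block-column-group, glued together by successive cancellation. Write $\underline{\phi}_j = [1, e_j, \dots, e_j^{k-2}]$ and $\mu_j = e_j^{k-1}$, so that the $r$-th length-$(k-1)$ block of the coefficient vector $\underline{\psi}_j$ equals $\mu_j^{r-1}\underline{\phi}_j$. Partitioning $\underline{x}_f = \underline{\psi}_f M$ into the $z_\delta$ length-$(k-1)$ column-blocks $\underline{x}_f^{(c)}$ and using the block-tridiagonal form of $M$ gives
\[
\underline{x}_f^{(c)} = \mu_f^{c-2}\,\underline{\phi}_f S_{2c-2} + \mu_f^{c-1}\,\underline{\phi}_f S_{2c-1} + \mu_f^{c}\,\underline{\phi}_f S_{2c},
\]
(the first term absent for $c=1$), so each block is a known combination of the vectors $\underline{\phi}_f S_j$. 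Since $i \mid z_\delta$, I would group the block-columns into $z_\delta/i$ consecutive super-blocks of width $i$; in the square top part $\tilde M$ of $M$ each super-block is a symmetric block-tridiagonal submatrix $B_m$ of size $i(k-1)$, and two consecutive super-blocks are coupled only through the single off-diagonal matrix $S_{2mi}$. This sparse coupling is exactly what makes a left-to-right successive-cancellation repair work.

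Within super-block $m$ (columns $(m-1)i+1,\dots,mi$, set $c_0=(m-1)i$) I would have each helper $h$ transmit the \emph{single} symbol $\underline{x}_h^{[m]}(\underline{\psi}_f^{[m]})^{T}$, where $\underline{x}_h^{[m]}$ and $\underline{\psi}_f^{[m]}$ denote the restrictions of $\underline{x}_h$ and $\underline{\psi}_f$ to the super-block's $i$ coordinate-blocks. The key identity is $\underline{\psi}_h^{[m]}B_m(\underline{\psi}_f^{[m]})^{T} = \underline{\psi}_h^{[m]}(\underline{\psi}_f^{[m]}B_m)^{T}$, valid because $B_m$ is symmetric; hence the main part of the received symbol is linear in the single unknown row vector $\underline{u}:=\underline{\psi}_f^{[m]}B_m$ (length $i(k-1)$), with known helper coefficient $\underline{\psi}_h^{[m]}$. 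The two coupling terms contribute, up to known scalar powers of $\mu_h,\mu_f$, the quantities $\underline{\phi}_h(\underline{\phi}_f S_{2c_0})^{T}$ and $\underline{\phi}_h(\underline{\phi}_f S_{2mi})^{T}$: the left one contains $\underline{\phi}_f S_{2c_0}$, already recovered as the right boundary of super-block $m-1$, so it is subtracted, while the right one introduces exactly one further unknown vector $\underline{\phi}_f S_{2mi}$ of length $(k-1)$. Together $\underline{u}$ and the two boundary vectors reconstitute $\underline{x}_f^{[m]}$; note that the last super-block's right boundary $S_{2z_\delta}$ is precisely the block in the extra bottom block-row of $M$, which is why that row is present.

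After the left cancellation, the $d_i=(i+1)(k-1)$ helper equations form a square system in the $(i+1)(k-1)$ scalar unknowns $(\underline{u},\underline{\phi}_f S_{2mi})$. I would finish by checking invertibility: the coefficient row of helper $h$ is $[\mu_h^{c_0}\underline{\phi}_h,\dots,\mu_h^{c_0+i-1}\underline{\phi}_h,\ \mu_f^{mi-1}\mu_h^{mi}\underline{\phi}_h]$; pulling out the nonzero row factor $\mu_h^{c_0}$ and the nonzero column factor $\mu_f^{mi-1}$ leaves $[\underline{\phi}_h,\mu_h\underline{\phi}_h,\dots,\mu_h^{i}\underline{\phi}_h]=[1,e_h,\dots,e_h^{(i+1)(k-1)-1}]$, consecutive powers of $e_h$. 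Over $d_i$ helpers with distinct $e_h$ this is a square generalized Vandermonde matrix, invertible by Lemma~\ref{Lem_GVM}, so $\underline{u}$ and $\underline{\phi}_f S_{2mi}$ are recovered; assembling $\underline{x}_f^{[m]}$ and passing $\underline{\phi}_f S_{2mi}$ forward completes the induction. Each helper sends one symbol per super-block, i.e. $z_\delta/i=\alpha/(d_i-k+1)=\beta(d_i)$ symbols, matching \eqref{eq_BWAbeta}.

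The main obstacle — and the place where the special structure is essential — is that the obvious strategy of recovering every vector $\underline{\phi}_f S_j$ inside a super-block is infeasible: there are $2i$ such vectors but only $(i+1)(k-1)$ helper equations, so for $i\ge 2$ the system is underdetermined. The resolution is the choice of projection direction $(\underline{\psi}_f^{[m]})^{T}$ combined with the symmetry of $B_m$, which collapses those $2i$ unknowns into the single combination $\underline{u}=\underline{\psi}_f^{[m]}B_m$ actually needed, leaving only one genuine boundary unknown; verifying that the resulting $(i+1)(k-1)\times(i+1)(k-1)$ coefficient matrix is Vandermonde, and correctly handling the first super-block (no left coupling) and the last one (right coupling realized through the extra block-row of $M$), is the crux of the argument.
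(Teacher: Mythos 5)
Your proposal is correct and follows essentially the same route as the paper's own proof: the same per-super-block repair symbols $\underline{x}_{h}(i)\left(\underline{\psi}_{f}(i)\right)^{\intercal}$, the same partition of $M$ into symmetric diagonal blocks coupled only through the single matrices $S_{2im}$, the same left-to-right successive cancellation of the already-recovered boundary term, and the same reduction to inverting a $d\times d$ generalized Vandermonde system via Lemma \ref{Lem_GVM}. The only differences are notational (your $i$ and $m$ swap roles relative to the paper, and you phrase the inversion as solving for the stacked unknowns $(\underline{u},\underline{\phi}_{f}S_{2mi})$ rather than left-multiplying by $\left(\Omega_{\mathcal{H}}(i)\right)^{-1}$), so no substantive gap remains.
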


\begin{rmrk}
Note that (\ref{eq_D_design}) and (\ref{eq_BWAbeta}) are consistent with (\ref{eq_beta_design}), which satisfies the MSR characteristic equation (\ref{eq_BWAMSR}) for any $d \in D$.
\end{rmrk}

\begin{proof}
Without loss of generality let $d=(m+1)(k-1)$, for some $m\in\{1, \cdots, \delta\}$. Note that (\ref{eq_alpha_design}), and (\ref{eq_D_design}) guarantee that for any $d\in D$, $\alpha$ is an integer multiple of $d-k+1$, hence $\beta(d)$ is an integer. Each helper node $h\in \mathcal{H}$, creates $\beta(d)=\alpha/(d-k+1)$ repair symbols to repair node $f$ as follows. First helper node $h$ partitions its encoded content into $\beta(d)$ equal segments, such that for $i \in \{1,\cdots, \beta(d)\}$, the segment $\underline{x}_{h}(i)$ is of size $\alpha/\beta(d) = d-k+1 = m (k-1)$, and contains elements $x_{h,(i-1)m(k-1)+1}$ through $x_{h,im(k-1)}$. Then we have 
\begin{align}\label{eq_xh_repair_partition}
\underline{x}_{h} = \left[\underline{x}_{h}(1), \cdots, \underline{x}_{h}(\beta(d)) \right].
\end{align}
Similarly, for any node $\ell$, we split the first $\alpha$ entries of a coefficient vector assigned to node $\ell$, namely $\underline{\psi}_{\ell}$, into $\beta(d)$ equal segments as
\begin{align}\label{eq_psi_repair_partition}
\underline{\psi}_{\ell}(1:\alpha) = \left[ \underline{\psi}_{\ell}(1), \cdots, \underline{\psi}_{\ell}(\beta(d)) \right],
\end{align}
where each segment $\underline{\psi}_{\ell}(i)$ is of size $d-k+1 = m (k-1)$.

Now each helper node $h\in \mathcal{H}$, creates its $\beta(d)$ repair symbols as
\begin{align}\label{eq_rh_define}
\underline{r}(h,f) &= \left[ r_{1}(h,f), \cdots, r_{\beta(d)}(h,f)\right] \nonumber \\
&= \left[\underline{x}_{h}(1) \left(\underline{\psi}_{f}(1) \right)^{\intercal}, \cdots, \underline{x}_{h}(\beta(d)) \left(\underline{\psi}_{f}(\beta(d)) \right)^{\intercal} \right].
\end{align}

The repair decoder then stacks $d$ repair vectors $\underline{r}(h,f)$, for $h\in\mathcal{H}$, into a $d\times \beta(d)$ matrix
\begin{align}\label{eq_Upsilon_H}
\Upsilon_{\mathcal{H}} = \left[\begin{array}{c}
\underline{r}(\ell_{1},f) \\
\vdots \\
\underline{r}(\ell_{d},f)
\end{array} \right].
\end{align}
We then introduce the following partitioning of the matrix $\Upsilon_{\mathcal{H}}$, into $\beta(d)$ submatrices, as follows
\begin{align}\label{eq_Upsilon_partition}
\Upsilon_{\mathcal{H}} = \left[ \Upsilon_{\mathcal{H}}(1), \cdots, \Upsilon_{\mathcal{H}}(\beta(d)) \right],
\end{align}
where $\Upsilon_{\mathcal{H}}(i)$, $i\in\{1,\cdots,\beta\}$ is the $i^{\text{th}}$ column in $\Upsilon_{\mathcal{H}}$, of size $d \times 1$.

\begin{figure*}[!h]
\centering
\resizebox{4 in}{!}{
\includegraphics[scale=1]{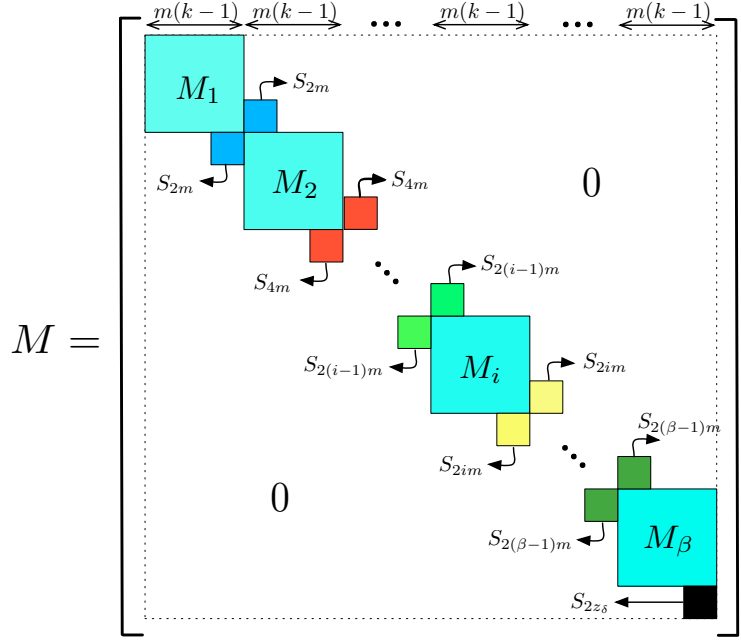}}
\caption{In the above figure $\beta$ represents $\beta(d)$, for some chosen $d \in D$, such that, $d = (m+1)(k-1)$. Moreover, the white area is filled by zeros, and each coloured square represent a non-zero symmetric submatrix of $M$.}
\label{Fig_M_Repair_Partition}
\hrulefill
\end{figure*}

\begin{figure*}[ht]
\begin{align}\label{eq_Mi}
M_{i} = \left[\begin{array}{c c c c c c c c c}
S_{2(i-1)m+1} ~ & ~ S_{2(i-1)m+2} ~ & ~ O ~ & ~ O ~ & ~ O ~ & & \cdots & & ~ O \\
S_{2(i-1)m+2} ~ & ~ S_{2(i-1)m+3} ~ & ~ S_{2(i-1)m+4} ~ & ~ O ~ & ~ O ~ & & \cdots & & ~ O \vspace{5pt}\\
O ~ & ~ S_{2(i-1)m+4} ~ & ~ S_{2(i-1)m+5} ~ & ~ S_{2(i-1)m+6} ~ & ~ O ~ & & \cdots & & ~ O \\
\vdots ~ &  & & & & \ddots & & & \vdots \vspace{5pt}\\
O ~ & & ~ \cdots & & & ~ O ~ & ~ S_{2im-4} ~ & ~ S_{2im-3} ~ & ~ S_{2im-2} \vspace{5pt}\\
O ~ & & ~ \cdots & & & ~ O ~ & ~ O ~ & ~ S_{2im-2} ~ & ~ S_{2im-1}
\end{array} \right].
\end{align}
\hrulefill
\end{figure*}

Before starting to describe the repair decoding procedure, we need to introduce some notations associated to a given repair scenario. Consider a repair procedure with $d=(m+1)(k-1)$. For the corresponding $\beta(d)=\alpha/(d-k+1)$, we will partition matrix $M$ as depicted in Fig. \ref{Fig_M_Repair_Partition}, and equation (\ref{eq_Mi}). Note that this results in $\beta(d)$ non-overlapping diagonal submatices $M_{i}$, $i \in \{1, \cdots, \beta(d)\}$, each of size $m(k-1) \times m(k-1)$, along with $(k-1) \times (k-1)$ symmetric submatrices $S_{2m}, S_{4m}, \cdots, S_{2\beta(d)m}=S_{2z_{\delta}}$ as shown in the figure. It is worth mentioning that the general pattern of the partitioning shown in figure \ref{Fig_M_Repair_Partition} is preserved the same for all $d \in D$, and only the size, and number of the $M_{i}$ diagonal blocks changes for different choices of the parameter $d$. From the construction of the message matrix, introduced in (\ref{eq_M_define}), each $M_{i}$ submatrix will be symmetric. As a result, the message matrix $M$ could be interpreted in terms of the submatrices $M_{i}$, and $S_{2i}$ for $i\in \{1, \cdots, \beta(d)\}$, associated to a repair procedure with $d=(m+1)(k-1)$, $d\in D$.

Finally the last notations we use to describe the adaptive repair decoding scheme, using a given set of helpers $\mathcal{H}=\{h_{1}, \cdots, h_{d}\}$, is,
\begin{align}\label{eq_Omega_i}
\Omega_{\mathcal{H}}(i) = \left[ \begin{array}{c c c c}
e_{h_{1}}^{(i-1)m(k-1)}~ & ~e_{h_{1}}^{(i-1)m(k-1)+1}~ & ~\cdots~ & ~e_{h_{1}}^{(im+1)(k-1)-1}  \\
 ~ & ~~ & \vdots & ~ \\
e_{h_{d}}^{(i-1)m(k-1)}~ & ~e_{h_{d}}^{(i-1)m(k-1)+1}~ & ~\cdots~ & ~e_{h_{d}}^{(im+1)(k-1)-1}
\end{array} \right], ~~i\in \{1, \cdots, \beta(d)\}.
\end{align}
Note that, $\Omega_{\mathcal{H}}(i),~i \in\{1, \cdots, \beta(d)\}$, is a $d \times d$ generalized Vandermonde matrix and hence is invertible as shown in Lemma \ref{Lem_GVM}. We denote the upper $(d-k+1)\times d$ submatrix of $\left(\Omega_{\mathcal{H}}(i)\right)^{-1}$ by $\Theta_{\mathcal{H}}(i)$, and the lower $(k-1) \times d$ submatrix by $\Xi_{\mathcal{H}}(i)$. Then we have,
\begin{align}\label{eq_Omega_partition}
\left(\Omega_{\mathcal{H}}(i)\right)^{-1} = \left[ \begin{array}{c}
\Theta_{\mathcal{H}}(i) \\
\Xi_{\mathcal{H}}(i)
\end{array}\right].
\end{align}

The decoding procedure for the repair of node $f$ is performed in $\beta(d)$ sequential steps as will be described in the following. Let's begin with the first step.

For the failed node $f$, let $\underline{\phi}_{f}$ denote the $1 \times (k-1)$ vector ,
\begin{align}\label{eq_phi_ell}
\underline{\phi}_{f} = \left[1~ , ~e_{f}~, ~\cdots~, ~e_{f}^{k-2} \right].
\end{align}
Using a partitioning similar to (\ref{eq_xh_repair_partition}) for $\underline{x}_{f}$, then we have,
\begin{align}\label{eq_x_f_1}
\underline{x}_{f}(1) &= \underline{\psi}_{f} \left[\begin{array}{c c c c}
\block(4,4){\left[\begin{array}{c c c c}
\vspace{7pt} ~~~~ & ~~ & ~~ & ~~~~ \\
 ~~~ & ~~~ & M_{1}~ & ~~~~~~ \\
\vspace{7pt} ~~~~ & ~~ & ~~ & ~~~~ 
\end{array}\right]} \\
&&& \\
&&& \\
&&& \\
~O~ & ~\cdots ~ & ~O & S_{2m}
\end{array} \right]  \nonumber \\
&= \underline{\psi}_{f}(1) M_{1} + \left[O, \cdots, O, e_{f}^{m(k-1)}\underline{\phi}_{f}S_{2m} \right]_{(k-1) \times m(k-1)}.
\end{align}
In the first step, the decoder recovers the two terms on the right in the above equation to reconstruct $\underline{x}_{f}(1)$ using only the first repair symbol received from each of the helpers, namely $r_{1}(h_{i},f)$, for $i\in \{1, \cdots, d\}$, as follows.

Using (\ref{eq_xh_repair_partition}) to (\ref{eq_Upsilon_partition}), and the partitioning denoted in Fig. \ref{Fig_M_Repair_Partition}, the submatrix $\Upsilon_{\mathcal{H}}(1)$, introduced in (\ref{eq_Upsilon_partition}) can be written as,
\begin{align}\label{eq_Upsilon_1_rewrite}
\Upsilon_{\mathcal{H}}(1) &= \left[\begin{array}{c}
\underline{x}_{h_{1}}(1)\left(\underline{\psi}_{f}(1)\right)^{\intercal} \\
\vdots \\
\underline{x}_{h_{d}}(1)\left(\underline{\psi}_{f}(1)\right)^{\intercal}
\end{array}\right] = \left[\begin{array}{c}
\underline{x}_{h_{1}}(1) \\
\vdots \\
\underline{x}_{h_{d}}(1)
\end{array}\right] \left(\underline{\psi}_{f}(1)\right)^{\intercal} \nonumber \\
&= \left[  \begin{array}{c c c c c}
1~ & ~e_{h_{1}}~ & ~e_{h_{1}}^{2}~ & ~\cdots~ & ~e_{h_{1}}^{(m+1)(k-1)-1}  \\
 ~ & ~~ & ~~ & \vdots & ~ \\
1~ & ~e_{h_{d}}~ & ~e_{h_{d}}^{2}~ & ~\cdots~ & ~e_{h_{d}}^{(m+1)(k-1)-1} 
\end{array} \right] \left[\begin{array}{c c c c}
\block(4,4){\left[\begin{array}{c c c c}
\vspace{7pt} ~~~~ & ~~ & ~~ & ~~~~ \\
 ~~~ & ~~~ & M_{1}~ & ~~~~~~ \\
\vspace{7pt} ~~~~ & ~~ & ~~ & ~~~~ 
\end{array}\right]} \\
&&& \\
&&& \\
&&& \\
~O~ & ~\cdots ~ & ~O & S_{2m}
\end{array} \right] \left( \underline{\psi}_{f}(1)\right)^{\intercal}.
\end{align}
Then using (\ref{eq_Omega_i}), we have, 
\begin{align}\label{eq_repair_step1}
\Upsilon_{\mathcal{H}}(1) = \Omega_{\mathcal{H}}(1) \left[\begin{array}{c c c c}
\block(4,4){\left[\begin{array}{c c c c}
\vspace{7pt} ~~~~ & ~~ & ~~ & ~~~~ \\
 ~~~ & ~~~ & M_{1}~ & ~~~~~~ \\
\vspace{7pt} ~~~~ & ~~ & ~~ & ~~~~ 
\end{array}\right]} \\
&&& \\
&&& \\
&&& \\
~O~ & ~\cdots ~ & ~O & S_{2m}
\end{array} \right] \left( \underline{\psi}_{f}(1)\right)^{\intercal}. 
\end{align}
Multiplying the inverse of $\Omega_{\mathcal{H}}(1)$ from right to the both sides of (\ref{eq_repair_step1}), and using (\ref{eq_Omega_partition}) the decoder derives
\begin{align}
\left[\begin{array}{c c c c}
\block(4,4){\left[\begin{array}{c c c c}
\vspace{7pt} ~~~~ & ~~ & ~~ & ~~~~ \\
 ~~~ & ~~~ & M_{1}~ & ~~~~~~ \\
\vspace{7pt} ~~~~ & ~~ & ~~ & ~~~~ 
\end{array}\right]} \\
&&& \\
&&& \\
&&& \\
~O~ & ~\cdots ~ & ~O & S_{2m}
\end{array} \right] \left( \underline{\psi}_{f}(1)\right)^{\intercal} = \left[ \begin{array}{c}
\Theta_{\mathcal{H}}(i) \\
\Xi_{\mathcal{H}}(i)
\end{array}\right] \Upsilon_{\mathcal{H}}(1) \nonumber
\end{align}
That gives,
\begin{align}\label{eq_M1_repair}
M_{1} \left( \underline{\psi}_{f}(1) \right)^{\intercal} = \Theta_{\mathcal{H}}(1) \Upsilon_{\mathcal{H}}(1),
\end{align}
and similarly, using (\ref{eq_phi_ell}),
\begin{align}\label{eq_S2m_repair}
S_{2m} \left( e_{f}^{(m-1)(k-1)} \underline{\phi}_{f} \right)^{\intercal} = \Xi_{\mathcal{H}}(1) \Upsilon_{\mathcal{H}}(1).
\end{align} 
Since both $M_{1}$, and $S_{2m}$ are symmetric, from (\ref{eq_M1_repair}) we have,
\begin{align}\label{eq_additiveterm1}
\underline{\psi}_{f}(1) M_{1} = \left(\Theta_{\mathcal{H}}(1) \Upsilon_{\mathcal{H}}(1)\right)^{\intercal},
\end{align}
and from (\ref{eq_S2m_repair}), by multiplying the scalar $e_{f}^{(k-1)}$, we get
\begin{align}\label{eq_additiveterm2}
e_{f}^{m(k-1)}\underline{\phi}_{f} S_{2m}  = e_{f}^{(k-1)} \left( \Xi_{\mathcal{H}}(1) \Upsilon_{\mathcal{H}}(1) \right)^{\intercal}.
\end{align} 

From (\ref{eq_additiveterm1}), and (\ref{eq_additiveterm2}), and using (\ref{eq_x_f_1}) the decoder then recovers $\underline{x}_{f}(1)$ as,
\begin{align}\label{eq_xf1_repair}
\underline{x}_{f}(1) = \underline{\psi}_{f}(1) M_{1} + \left[O, \cdots, O, e_{f}^{m(k-1)}\underline{\phi}_{f}S_{2m} \right]_{(k-1) \times m(k-1)}, 
\end{align}
where, the rightmost term in the above expression is derived by padding $m-1$, $(k-1) \times (k-1)$ zero matrices, $O$, to the left of the matrix calculated in (\ref{eq_additiveterm2}).

In step $i$ for $i=2$ through $\beta(d)$ of the repair decoding, the decoder then recovers $\underline{x}_{f}(i)$, using $\Upsilon_{\mathcal{H}}(i)$ received from the helpers, along with $e_{f}^{(i-1)m(k-1)} \underline{\phi}_{f} S_{2(i-1)m}$, recovered from the step $i-1$ of decoding. To this end, first note that similar to (\ref{eq_Upsilon_1_rewrite}) the repair symbols in $\Upsilon_{\mathcal{H}}(i)$ can be written as,
\begin{align}
\Upsilon_{\mathcal{H}}(i) &= \left[\begin{array}{c}
\underline{x}_{h_{1}}(i)\left(\underline{\psi}_{f}(i)\right)^{\intercal} \\
\vdots \\
\underline{x}_{h_{d}}(i)\left(\underline{\psi}_{f}(i)\right)^{\intercal}
\end{array}\right] = \left[\begin{array}{c}
\underline{x}_{h_{1}}(i) \\
\vdots \\
\underline{x}_{h_{d}}(i)
\end{array}\right] \left(\underline{\psi}_{f}(i)\right)^{\intercal} \nonumber \\
&= \left[  \begin{array}{c c c c}
e_{h_{1}}^{(i-1)m(k-1)}~ & ~e_{h_{1}}^{(i-1)m(k-1)+1}~ & ~\cdots~ & ~e_{h_{1}}^{(im+1)(k-1)-1}  \\
 ~ & ~~ & \vdots & ~ \\
e_{h_{d}}^{(i-1)m(k-1)}~ & ~e_{h_{d}}^{(i-1)m(k-1)+1}~ & ~\cdots~ & ~e_{h_{d}}^{(im+1)(k-1)-1}
\end{array} \right] \left[\begin{array}{c c c c}
S_{\tiny{2(i-1)m}} & O & \cdots & O \\
\block(4,4){\left[\begin{array}{c c c c}
\vspace{7pt} ~~~~~~ & ~~ & ~~ & ~~~~~~ \\
 ~~~~~ & ~~~ & M_{i}~ & ~~~~~~~~ \\
\vspace{7pt} ~~~~~~ & ~~ & ~~ & ~~~~~~ 
\end{array}\right]} \\
&&& \\
&&& \\
&&& \\
O & ~\cdots ~ & ~O & S_{2m}
\end{array} \right] \left( \underline{\psi}_{f}(i)\right)^{\intercal}. \nonumber 
\end{align}
 Using (\ref{eq_Omega_i}) we can rewrite the above equation as,
 \begin{align}
 \Upsilon_{\mathcal{H}}(i) = \left[\begin{array}{c}
e_{\ell_{1}}^{(i-1)m(k-1) -\mu} \underline{\phi}_{\ell_{1}} \\
\vdots \\
e_{\ell_{d}}^{(i-1)m(k-1) -\mu} \underline{\phi}_{\ell_{d}}
\end{array} \right] \hspace{-1mm} S_{2(i-1)m} \hspace{-1mm} \left(e_{f}^{(i-1)m(k-1)}\underline{\phi}_{f}\right)^{\intercal} + \Omega_{\mathcal{H}}(i) \left[\begin{array}{c c c c}
\block(4,4){\left[\begin{array}{c c c c}
\vspace{7pt} ~~~~ & ~~ & ~~ & ~~~~ \\
 ~~~ & ~~~ & M_{i}~ & ~~~~~~ \\
\vspace{7pt} ~~~~ & ~~ & ~~ & ~~~~ 
\end{array}\right]} \\
&&& \\
&&& \\
&&& \\
~O~ & ~\cdots ~ & ~O & S_{2im}
\end{array} \right] \left( \underline{\psi}_{f}(i)\right)^{\intercal}. \nonumber
\end{align}
The decoder first removes the contribution of the $S_{2(i-1)m}$ submatrix in the repair symbols in $\Upsilon_{\mathcal{H}}(i)$ by calculating 
\begin{align}\label{eq_UpsilonHat_i}
\hat{\Upsilon}_{\mathcal{H}}(i) \hspace{-1mm} = \hspace{-1mm} \Upsilon_{\mathcal{H}}(i) \hspace{-1mm} - \hspace{-1mm} \left[\begin{array}{c}
e_{\ell_{1}}^{(i-1)m(k-1)-(k-1)} \underline{\phi}_{\ell_{1}} \\
\vdots \\
e_{\ell_{d}}^{(i-1)m(k-1)-(k-1)} \underline{\phi}_{\ell_{d}}
\end{array} \right] \hspace{-1mm} S_{2(i-1)m} \hspace{-1mm} \left(e_{f}^{(i-1)m(k-1)}\underline{\phi}_{f}\right)^{\intercal}.
\end{align}
In the above expression, $S_{2(i-1)m} \left(e_{f}^{(i-1)m(k-1)}\underline{\phi}_{f}\right)^{\intercal}$ is itself derived by transposing $e_{f}^{(i-1)m(k-1)} \underline{\phi}_{f} S_{2(i-1)m}$. As a result we have,
\begin{align}
\hat{\Upsilon}_{\mathcal{H}}(i) = \Omega_{\mathcal{H}}(i) \left[\begin{array}{c c c c}
\block(4,4){\left[\begin{array}{c c c c}
\vspace{7pt} ~~~~ & ~~ & ~~ & ~~~~ \\
 ~~~ & ~~~ & M_{i}~ & ~~~~~~ \\
\vspace{7pt} ~~~~ & ~~ & ~~ & ~~~~ 
\end{array}\right]} \\
&&& \\
&&& \\
&&& \\
~O~ & ~\cdots ~ & ~O & S_{2im}
\end{array} \right] \left( \underline{\psi}_{f}(i)\right)^{\intercal}. \nonumber
\end{align}
Therefore, similar to (\ref{eq_M1_repair}) through (\ref{eq_additiveterm2}) the decoder derives,
\begin{align}\label{eq_Mi_repair}
\underline{\psi}_{f}(i)M_{i} = \left(\Theta_{\mathcal{H}}(i) \Upsilon_{\mathcal{H}}(i)\right)^{\intercal},
\end{align}
and
\begin{align}\label{eq_additiveterm2i}
e_{f}^{im(k-1)} \underline{\phi}_{f} S_{2im} = e_{f}^{(k-1)} \left( \Xi_{\mathcal{H}}(i) \Upsilon_{\mathcal{H}}(i)\right)^{\intercal}.
\end{align} 
Finally, using (\ref{eq_Mi_repair}) and (\ref{eq_additiveterm2i}), we have
\begin{align}\label{eq_xf_i}
\underline{x}_{f}(i) = \underline{\psi}_{f}(i) M_{i} + \left[O, \cdots, O, e_{f}^{im(k-1)}\underline{\phi}_{f}S_{2im} \right]_{(k-1) \times m(k-1)}.
\end{align} 
\end{proof}

The following algorithm summarizes the bandwidth adaptive repair procedure.
\begin{algorithm}\label{Alg_AdaptiveRepair}
\caption{Bandwidth Adaptive Repair}
\begin{algorithmic}[1]
\State Input: $f$, $e_{f}$, $\beta$, $m$, $k$ and $\underline{r}(h,f),~\underline{\psi}_{h},~e_{h}$, for $h\in \mathcal{H}$.
\State Form matrices $\Upsilon_{\mathcal{H}}(i)$, for $i\in\{1, \cdots, \beta \}$, using (\ref{eq_Upsilon_H}), and (\ref{eq_Upsilon_partition}).
\State Form vectors $\underline{\phi}_{f}$ using (\ref{eq_phi_ell}).
\State Calculate matrices $\Omega_{\mathcal{H}}(i)$, and $\left(\Omega_{\mathcal{H}}(i)\right)^{-1}$, for $i \in \{1, \cdots, \beta\}$, using (\ref{eq_Omega_i}).
\State Calculate matrices $\underline{\psi}_{f}(1) M_{1}$ using (\ref{eq_additiveterm1}).
\State Calculate matrices $e_{f}^{ma}\underline{\phi}_{f} S_{2m}$ using (\ref{eq_additiveterm2}).
\State Recover $\underline{x}_{f}(1)$ using (\ref{eq_xf1_repair}).
\For{$i = 2$ to $\beta$}
\State Calculate $\hat{\Upsilon}_{\mathcal{H}}(i)$ using (\ref{eq_UpsilonHat_i}).
\State Calculate $\underline{\psi}_{f}(i)M_{i}$ using (\ref{eq_M1_repair}).
\State Calculate $e_{f}^{im(k-1)} \underline{\phi}_{f} S_{2im}$ using (\ref{eq_additiveterm2i}).
\State Recover $\underline{x}_{f}(i)$ using (\ref{eq_xf_i}).
\EndFor
\State Form $\underline{x}_{f} = [\underline{x}_{f}(1), \cdots, \underline{x}_{f}(\beta)]$.
\end{algorithmic}
\end{algorithm}

\begin{rmrk}
Note that calculating the inverse matrices $\left(\Omega_{\mathcal{H}}(i)\right)^{-1}$, for $i \in \{1, \cdots, \beta\}$, can be carried out recursively since from (\ref{eq_Omega_i}) we have,
\begin{align}
\Omega_{\mathcal{H}}(i) = \left[ \begin{array}{c c c c c}
e_{h_{1}}^{(i-1)m(k-1)}~ & ~0~ & \cdots & ~0 \\
0~ & ~e_{h_{2}}^{(i-1)m(k-1)}~ & \cdots & ~0 \\
\vdots & ~~ & \ddots & ~~ & \vdots \\
0~ & ~0~ & \cdots & ~e_{h_{d}}^{(i-1)m(k-1)}
\end{array}\right]\Omega_{\mathcal{H}}(1) \nonumber
\end{align}
\end{rmrk}

\begin{rmrk}
In a DSS with $n$ nodes, for $D=\{d_{1},\cdots, d_{\delta}\}$, the bandwidth adaptive MSR codes presented in \cite{ExplicitMSR} require 
\begin{align}\label{eq_Ye_Barg_alpha}
\alpha = \left(\mathrm{lcm}\left(d_{1}-k+1, \cdots, d_{\delta}-k+1 \right)\right)^n. 
\end{align}
Comparing (\ref{eq_Ye_Barg_alpha}) with (\ref{eq_alpha_design}), one could see that the presented scheme reduces the required $\alpha$ (and $\beta$) values exponentially. However, this scheme works only for $2k-2 \leq d_{i},~\forall{d_{i}\in D}$. Hence, the design of high-rate bandwidth adaptive MSR codes with small $\alpha$ and $\beta$ still remains an open problem.
\end{rmrk}

The following example provides a detailed illustration of the MSR bandwidth adaptive exact repair procedure in the same setup as described in Examples \ref{EX_first} to \ref{EX_DCnotation}. 

\begin{example}\label{EX_BWAExactRepair}
As in the previous examples we will consider $k=3$, and $\delta=2$, and using (\ref{eq_alpha_design}), $\alpha = 4$. The code alphabet is $\mathbb{F}_{11}$, and the message matrix $M$ as given in (\ref{eq_EX_DataMatrix}). As a result, using (\ref{eq_D_design}), we have $D=\{d_{1},d_{2}\}=\{4,6\}$, and from (\ref{eq_BWAbeta}) their associated per-node repair bandwidths are $\beta_{1}=2$, and $\beta_{2}=1$. Without loss of generality, let's assume node 7 is failed, \emph{i.e.} $f=7$. In this setup, the following two repair scenarios are then possible.

\emph{A:} One option is to use $d=d_{2}=6$ helpers, and download only $\beta_{2} = 1$ repair symbol from each of them, which means $d=(m+1)(k-1)$ for $m=2$. Consider $\mathcal{H}=\{1,2,\cdots, 6\}$. In this case, using the coefficient matrix $\Psi$ for this setup, which is given in (\ref{eq_EX_Psi}), from (\ref{eq_psi_repair_partition}), with $\alpha=4, \beta=1$, for any node $\ell$ we have
\begin{align}
\underline{\psi}_{\ell}(1)=\underline{\psi}_{\ell}(1:4), \nonumber
\end{align}
which is a row vector consisting of the first $\alpha=4$ elements of the coefficient vector assigned to node $\ell$. As a result, using (\ref{eq_xh_repair_partition}), each helper node $h \in \mathcal{H}$ will use all of its encoded content $\underline{x}_{h}$ to create a single repair symbol as
\begin{align}
\underline{r}(h,f) = [r_{1}(h,f)] &= \left[\underline{x}_{h}(1)\cdot \left(\underline{\psi}_{f}(1)\right)^{\intercal}\right] \nonumber \\
&= \left[\underline{x}_{h}\cdot [1, 7, 5, 2]^{\intercal}\right]. \nonumber
\end{align}

The repair decoder will then receive 
\begin{align}
\Upsilon_{\mathcal{H}} &= [\Upsilon_{\mathcal{H}}(1)] \nonumber \\
&= \left[\begin{array}{c}
\underline{r}(1,f) \\
\vdots \\
\underline{r}(6,f)
\end{array}\right] = \left[\begin{array}{c}
x_{1,1}+7x_{1,2}+5x_{1,3}+2x_{1,4} \\
x_{2,1}+7x_{2,2}+5x_{2,3}+2x_{2,4} \\
x_{3,1}+7x_{3,2}+5x_{3,3}+2x_{3,4} \\
x_{4,1}+7x_{4,2}+5x_{4,3}+2x_{4,4} \\
x_{5,1}+7x_{5,2}+5x_{5,3}+2x_{5,4} \\
x_{6,1}+7x_{6,2}+5x_{6,3}+2x_{6,4}
\end{array} \right]. \nonumber
\end{align}

Moreover, using (\ref{eq_phi_ell}), and with $e_{f} = 7$ from (\ref{eq_EX_Psi}) as also used in all previous examples, for $f=7$ we get
\begin{align}
\underline{\phi}_{f} = \left[1, 7 \right]. \nonumber
\end{align}

Finally notice that in this case the repair decoding will only have one single step. Therefore, for step $i=1$, with $m=2$, and $k=3$ we get $e_{\ell}^{im(k-1)} = e_{\ell}^{4}$, and hence,
\begin{align}\label{eq_EX_Omega_caseOne}
\Omega_{\mathcal{H}}(1) = \left[ \begin{array}{c c c c c c}
1~ & ~1~ & ~1~ & ~1~ & ~1~ & ~1 \\
1~ & ~2~ & ~4~ & ~8~ & ~5~ & ~10 \\
1~ & ~3~ & ~9~ & ~5~ & ~4~ & ~1 \\
1~ & ~4~ & ~5~ & ~9~ & ~3~ & ~1 \\
1~ & ~5~ & ~3~ & ~4~ & ~9~ & ~1 \\
1~ & ~6~ & ~3~ & ~7~ & ~9~ & ~10
\end{array}\right].
\end{align}

Then, based on the partition represented in Fing. (\ref{Fig_M_Repair_Partition}), for 
\begin{align}
M_{1} = \left[\begin{array}{c c}
S_{1}~ & ~S_{2} \\
S_{2}~ & ~S_{3}
\end{array} \right], \nonumber
\end{align}
the decoder has access to 
\begin{align}
\Upsilon_{\mathcal{H}}&=\Omega_{\mathcal{H}}(1) \left[\begin{array}{c c c}
\block(3,3){\left[\begin{array}{c c c}
 ~ & ~ & ~  \\ 
 ~ & M_{1} & ~ \\
 ~ & ~ & ~
\end{array}\right]} \\
&& \\
&& \\
~~O && S_{4}~~
\end{array} \right] \left( \underline{\psi}_{f}(1)\right)^{\intercal} \nonumber \\
&=\Omega_{\mathcal{H}}(1)\left[\begin{array}{c c}
\block(2,2){\left[\begin{array}{c c c c} 
s_{1}~ & ~s_{2}~ & ~s_{4}~ & ~s_{5} \\
s_{2}~ & ~s_{3}~ & ~s_{5}~ & ~s_{6} \\
s_{4}~ & ~s_{5}~ & ~s_{7}~ & ~s_{8} \\
s_{5}~ & ~s_{6}~ & ~s_{8}~ & ~s_{9}
\end{array}\right]} \\
& \\
& \\
& \\  
\hspace{1 mm}\left[\begin{array}{c c}
0~ & ~~0 \\
0~ & ~~0
\end{array}\right] & \hspace{-2mm}\left[\begin{array}{c c}
s_{10} & ~s_{11} \\
s_{11} & ~s_{12}
\end{array} \right]
\end{array} \right] \left( \underline{\psi}_{f}(1)\right)^{\intercal}. \nonumber
\end{align}

Note that the $\Omega_{\mathcal{H}}(1)$, given in (\ref{eq_EX_Omega_caseOne}) is an invertible matrix in the code alphabet $\mathbb{F}_{11}$, and we have
\begin{align}
\left(\Omega_{\mathcal{H}}(1)\right)^{-1} = \left[\begin{array}{c}
\Theta_{\mathcal{H}}(1) \\
\Xi_{\mathcal{H}}(1)
\end{array}\right] = \left[\begin{array}{c}
\left[ \begin{array}{c c c c c c}
6~ & ~7~ & ~9~ & ~7~ & ~6~ & ~10 \\
10~ & ~10~ & ~9~ & ~0~ & ~3~ & ~1 \\
3~ & ~6~ & ~9~ & ~1~ & ~9~ & ~5 \\
1~ & ~8~ & ~0~ & ~8~ & ~2~ & ~3 
\end{array}\right] \\
\vspace{1.5 mm}\left[\begin{array}{c c c c c c}
~2~ & ~~7~ & ~7~ & ~5~ & ~8~ & ~4 \\
~1~ & ~~6~ & ~10~ & ~1~ & ~5~ & ~10
\end{array}\right]
\end{array} \right]. \nonumber
\end{align}

The decoder then calculates the lost data $\underline{x}_{f}$, using (\ref{eq_xf1_repair}), with $e_{f}^{(k-1)}=7^2=5$ in $\mathbb{F}_{11}$, as follows
\begin{align}
\underline{x}_{f} = \left(\Theta_{\mathcal{H}}(1)\Upsilon_{\mathcal{H}}\right)^{\intercal}+\left[0,0, 5\left(\Xi_{\mathcal{H}}(1)\Upsilon_{\mathcal{H}}\right)^{\intercal}\right]. \nonumber
\end{align}

\emph{B:} The second option is to use $d=d_{1}=4$ helpers. With $k=3$ we have $d = (m+1)(k-1)$ for $m=1$. In this case, using (\ref{eq_BWAbeta}) we will have $\beta_{1}=4/(4-2)=2$ repair symbols per helper node. Let's without loss of generality assume node 7 is failed, \emph{i.e.} $f=7$, and $\mathcal{H}=\{1,2,3,4\}$ is the set of helper nodes chosen to perform the repair. As a result, using the coefficient matrix in the code alphabet $\mathbb{F}_{11}$, given in (\ref{eq_EX_Psi}), from equation (\ref{eq_psi_repair_partition}) for the helper nodes we have
\begin{align}
\underline{\psi}_{1}(1) = [1, 1], ~~ \underline{\psi}_{1}(2) = [1, 1], \nonumber \\
\underline{\psi}_{2}(1) = [1, 2], ~~ \underline{\psi}_{2}(2) = [4, 8], \nonumber \\
\underline{\psi}_{3}(1) = [1, 3], ~~ \underline{\psi}_{3}(2) = [9, 5], \nonumber \\
\underline{\psi}_{4}(1) = [1, 4], ~~ \underline{\psi}_{4}(2) = [5, 9], \nonumber
\end{align}
and for the failed node $f=7$, 
\begin{align}
\underline{\psi}_{f}(1) = [1, 7], ~~ \underline{\psi}_{f}(2) = [5, 2]. \nonumber
\end{align}
Similarly, for the coded content of each of these nodes we consider the following partition
\begin{align}
\underline{x}_{\ell} &= [\underline{x}_{\ell}(1), \underline{x}_{\ell}(2)] \nonumber \\
&= [[x_{\ell,1}, x_{\ell,2}], [x_{\ell,3}, x_{\ell,4}]], ~~\text{for}~\ell\in \{1,2,3,4,7\}. \nonumber
\end{align}

Each helper node $h\in\mathcal{H}$ then creates two repair symbols 
\begin{align}
\underline{r}(h,f) &= [r_{1}(h,f),r_{2}(h,f)] \nonumber \\
&= \left[\underline{x}_{\ell}(1)\left(\underline{\psi}_{f}(1) \right)^{\intercal}, \underline{x}_{\ell}(2)\left(\underline{\psi}_{f}(2) \right)^{\intercal}\right] \nonumber \\ &=[x_{h,1}+7x_{h,2},~5x_{h,3}+2x_{h,4}], \nonumber
\end{align}
and the repair decoder receives
\begin{align}
\Upsilon_{\mathcal{H}} &= [\Upsilon_{\mathcal{H}}(1), \Upsilon_{\mathcal{H}}(2)] \nonumber \\
&=\left[\begin{array}{cc}
\left[\begin{array}{c}
r_{1}(1,f) \\
\vdots \\
r_{1}(4,f)
\end{array}\right] & \left[\begin{array}{c}
r_{2}(1,f) \\
\vdots \\
r_{2}(4,f)
\end{array} \right]
\end{array} \right] \nonumber \\
&= \left[ \begin{array}{c c}
\left[\begin{array}{c}
x_{1,1}+7x_{1,2} \\
x_{2,1}+7x_{2,2} \\
x_{3,1}+7x_{3,2} \\
x_{4,1}+7x_{4,2}
\end{array} \right] & \left[ \begin{array}{c}
5x_{1,3}+2x_{1,4} \\
5x_{2,3}+2x_{2,4} \\
5x_{3,3}+2x_{3,4} \\
5x_{4,3}+2x_{4,4}
\end{array} \right]
\end{array} \right]. \nonumber
\end{align}

Moreover, using (\ref{eq_phi_ell}), and with $e_{f} = 7$ from (\ref{eq_EX_Psi}) as also used in all previous examples, for $f=7$ we get
\begin{align}
\underline{\phi}_{f} = \left[1, 7 \right]. \nonumber
\end{align}

Finally notice that in this case the repair decoding will have two steps. Therefore, for step $i=1$, with $m=1$, and $k=3$ we get $e_{\ell}^{im(k-1)} = e_{\ell}^{2}$, and hence, using equation (\ref{eq_Omega_i}), we have
\begin{align}\label{eq_EX_Omega_caseTwo_1}
\Omega_{\mathcal{H}}(1) = \left[ \begin{array}{c c c c}
1~ & ~1~ & ~1~ & ~1 \\
1~ & ~2~ & ~4~ & ~8 \\
1~ & ~3~ & ~9~ & ~5 \\
1~ & ~4~ & ~5~ & ~9
\end{array}\right].
\end{align}

Then, according to the partition represented in Fig. (\ref{Fig_M_Repair_Partition}), for 
\begin{align}
M_{1} = S_{1}, ~~M_{2} = S_{3}, \nonumber
\end{align}
 the repair decoder has access to 
\begin{align}
\Upsilon_{\mathcal{H}}(1)&=\Omega_{\mathcal{H}}(1) \left[\begin{array}{c}
 M_{1}  \\
 S_{2}
\end{array} \right] \left( \underline{\psi}_{f}(1)\right)^{\intercal} \nonumber \\
&=\Omega_{\mathcal{H}}(1)\left[\begin{array}{c}
\left[\begin{array}{c c}
s_{1}~ & ~s_{2} \\
s_{2}~ & ~s_{3}
\end{array}\right] \\
\left[\begin{array}{c c}
s_{4}~ & ~s_{5} \\
s_{5}~ & ~s_{6}
\end{array} \right]
\end{array} \right] \left[\begin{array}{c}
1 \\
7
\end{array} \right]. \nonumber
\end{align}
As expected $\Omega_{\mathcal{H}}(1)$ is an invertible matrix in the code alphabet $\mathbb{F}_{11}$, and we have
\begin{align}
\left(\Omega_{\mathcal{H}}(1)\right)^{-1} = \left[\begin{array}{c}
\Theta_{\mathcal{H}}(1) \\
\Xi_{\mathcal{H}}(1)
\end{array}\right] = \left[\begin{array}{c}
\left[ \begin{array}{c c c c}
4~ & ~5~ & ~4~ & ~10 \\
3~ & ~4~ & ~4~ & ~0 
\end{array}\right] \\
\vspace{2mm}\left[\begin{array}{c c c c}
7~ & ~7~ & ~9~ & ~10 \\
9~ & ~6~ & ~5~ & ~2 
\end{array}\right]
\end{array} \right]. \nonumber
\end{align}

The decoder then calculates the lost data $\underline{x}_{f}(1)$, using (\ref{eq_xf1_repair}), with $e_{f}^{m(k-1)}=7^2=5$ in $\mathbb{F}_{11}$, as follows
\begin{align}
\underline{x}_{f}(1) = \left(\Theta_{\mathcal{H}}(1)\Upsilon_{\mathcal{H}(1)}\right)^{\intercal}+5\left(\Xi_{\mathcal{H}}(1)\Upsilon_{\mathcal{H}}(1)\right)^{\intercal}. \nonumber
\end{align}

To start the second step, $i=2$, of the repair decoding, then the repair decoder first uses $e_{f}^{(i-1)m(k-1)} = 7^{2} = 5$, to calculate 
\begin{align}
S_{2} \left(e_{f}^{(i-1)m(k-1)}\underline{\phi}_{f}\right)^{\intercal} = 5\left(\Xi_{\mathcal{H}}(1)\Upsilon_{\mathcal{H}}(1)\right), \nonumber
\end{align}
where $5\left(\Xi_{\mathcal{H}}(1)\Upsilon_{\mathcal{H}}(1)\right)$ is already derived in the step 1. Then the decoder calculates $\hat{\Upsilon}_{\mathcal{H}}(2)$ using equation (\ref{eq_UpsilonHat_i}) as,
\begin{align}
\hat{\Upsilon}_{\mathcal{H}}(2) &=  \Upsilon_{\mathcal{H}}(2)-\left[\begin{array}{c}
e_{1}^{((2-1)m-1)(k-1)} \underline{\phi}_{1} \\
\vdots \\
e_{4}^{((2-1)m-1)(k-1)} \underline{\phi}_{4}
\end{array} \right] S_{2} \left(e_{f}^{(2-1)m(k-1)}\underline{\phi}_{f}\right)^{\intercal} \nonumber \\
&= \left[ \begin{array}{c}
r_{2}(1,f) \\
r_{2}(2,f) \\
r_{2}(3,f) \\
r_{2}(4,f)
\end{array} \right]-\left[\begin{array}{c c}
1~ & ~1 \\
1~ & ~2 \\
1~ & ~3 \\
1~ & ~4 
\end{array} \right] S_{2} \left(5\left(\Xi_{\mathcal{H}}(1)\Upsilon_{\mathcal{H}}(1)\right)\right). \nonumber
\end{align}

From (\ref{eq_Mi_repair}), with 
\begin{align}
\Omega_{\mathcal{H}}(2) = \left[ \begin{array}{c c c c}
1~ & ~1~ & ~1~ & ~1 \\
4~ & ~8~ & ~5~ & ~10 \\
9~ & ~5~ & ~4~ & ~1 \\
5~ & ~9~ & ~3~ & ~1
\end{array}\right], \nonumber
\end{align}
the repair decoder then has access to 
\begin{align}
\hat{\Upsilon}_{\mathcal{H}}(2)&=\Omega_{\mathcal{H}}(2) \left[\begin{array}{c}
 M_{2}  \\
 S_{4}
\end{array} \right] \left( \underline{\psi}_{f}(2)\right)^{\intercal} \nonumber \\
&=\Omega_{\mathcal{H}}(2)\left[\begin{array}{c}
\left[\begin{array}{c c}
s_{7}~ & ~s_{8} \\
s_{8}~ & ~s_{9}
\end{array}\right] \\
\left[\begin{array}{c c}
s_{10}~ & ~s_{11} \\
s_{11}~ & ~s_{12}
\end{array} \right]
\end{array} \right] \left[\begin{array}{c}
5 \\
2
\end{array} \right]. \nonumber
\end{align}

As expected $\Omega_{\mathcal{H}}(2)$ is also an invertible matrix in the code alphabet $\mathbb{F}_{11}$, and we have
\begin{align}
\left(\Omega_{\mathcal{H}}(2)\right)^{-1} = \left[\begin{array}{c}
\Theta_{\mathcal{H}}(2) \\
\Xi_{\mathcal{H}}(2)
\end{array}\right] = \left[\begin{array}{c}
\left[ \begin{array}{c c c c}
4~ & ~4~ & ~9~ & ~2 \\
3~ & ~1~ & ~9~ & ~0 
\end{array}\right] \\
\vspace{2mm}\left[\begin{array}{c c c c}
7~ & ~10~ & ~1~ & ~2 \\
9~ & ~7~ & ~3~ & ~7 
\end{array}\right]
\end{array} \right]. \nonumber
\end{align}

Finally, the decoder calculates the lost data $\underline{x}_{f}(2)$, using (\ref{eq_xf1_repair}), with $e_{f}^{2m(k-1)}=7^4=3$ in $\mathbb{F}_{11}$, as follows
\begin{align}
\underline{x}_{f}(2) = \left(\Theta_{\mathcal{H}}(2)\Upsilon_{\mathcal{H}(2)}\right)^{\intercal}+3\left(\Xi_{\mathcal{H}}(2)\Upsilon_{\mathcal{H}}(2)\right)^{\intercal}. \nonumber
\end{align}
\end{example}

\section{Discussion and properties}\label{Sec_Discussion}

In this section we will briefly review some of the technical requirements and properties of the coding scheme presented in this work. Particularly, we will show that the field size and subpacketization level requirements of the presented coding scheme are not significantly limiting factors in the practical implementations. 

\subsection{Field Size Requirement}

The only factor that influences the choice of the code alphabet $\mathbb{F}_{q}$ in the presented coding scheme is the existence coefficient matrix $\Psi$, and all the inverses of its square submatrices. To satisfy this requirement for a network with $n$ storage nodes, it is enough to have $q \geq n$ \cite{VandermondeInverse}, which is the same as the field size requirement of many other coding schemes such as the Product Matrix codes \cite{PM_Codes} or the commonly used Reed-Solomon codes \cite{RS_Codes}. Hence, the presented coding scheme in this work is providing the bandwidth adaptivity property at no extra cost in the field size requirements, and any field of size larger than $n$ could be used as the code alphabet. It worth mentioning that the field size requirement of the only other bandwidth adaptive exact repair MSR constructions, introduced in \cite{ExplicitMSR} is lower bounded by $n^2-kn$ which is significantly larger for large distributed storage networks $n$. Moreover, techniques such as those presented in \cite{Raviv_Any_Field} could easily be applied in the presented coding scheme to reduce the field size to any arbitrary (e.g. binary) small field. However, this discussion is out of the scope of this work.

\subsection{Subpacketization Level}

The requirement for the subpacketization level $\alpha$ for the presented coding scheme is given in by (\ref{eq_alpha_design}). A natural question that arises is how fast does $\alpha$ grow as a function of the code design parameters, $k$, and $\delta$. From (\ref{eq_alpha_design}) it is clear that $\alpha$ is proportional to the design parameter $k$. Regarding the dependency of $\alpha$ on $\delta$, from \cite{LCMn_Nair} we have, 
\begin{align}
\mathrm{lcm}\left(1,2,\cdots,\delta \right) \leq 4^{\delta}, \nonumber
\end{align}
 and, recently \cite{LCMn_book} showed that
\begin{align}
\mathrm{lcm}\left(1,2,\cdots,\delta \right) \geq 2^{\delta},~~\text{for}~\delta \geq 7.  \nonumber
\end{align}

As a result, for the presented coding scheme, we have
\begin{align}
(k-1) 2^{\delta} \leq \alpha \leq (k-1) 4^{\delta},~~\text{for}~\delta \geq 7.  \nonumber
\end{align}

\section{Conclusion}
We presented an alternative solution for exact repair MSR codes in which optimal exact repair is guaranteed simultaneously with a range of choices, $D=\{d_1,\cdots,d_\delta\}$, for the number of helpers. The introduced coding scheme is based on the Product Matrix framework, introduced first in \cite{PM_Codes}. The repair mechanism in this framework is based on specific symmetries in the structure of the \emph{message matrix}. We proposed a novel structure for the message matrix, which preserves the required symmetries in many different submatrices. Corresponding repair mechanisms are also introduced to utilise these symmetries to perform optimal repair, whit different choices for the number of helpers, namely $d_{i}=(i+1)(k-1),~i\in\{1,\cdots,\delta\}$. In addition, the data reconstruction procedure is enhanced based on a novel successive interference cancellation scheme to perform optimally under the new design of message matrix. Comparing to the only other explicit constructions with exact optimal bandwidth adaptive repair, presented in \cite{ExplicitMSR}, we showed that when $d_{i}\geq 2k-2,~\forall{d_{i} \in D}$, the required values for $\alpha$, is reduced from $z_{\delta}^{n}$ to $kz_{\delta}$ for a DSS with $n$ nodes, and $k$ systematic nodes, which also reduces $\beta$ exponentially.

\appendices
\section{Proof of Lemma \ref{Lem_PMreconstruction}}\label{APP_Reconstruction}

Multiplying both sides of (\ref{eq_reconstruction_step}) by $\Phi^{\intercal}$ from right we get
\begin{align}\label{eq_X_symmetric}
X\Phi^{\intercal} = \Phi A \Phi^{\intercal} + \Delta \Phi B \Phi^{\intercal}.
\end{align}
Following the notation in \cite{PM_Codes}, we introduce
\begin{align}
P = \Phi A \Phi^{\intercal}, ~~Q = \Phi B \Phi^{\intercal}. \nonumber
\end{align}
Then using (\ref{eq_X_symmetric}) and the above equations we have
\begin{align}
X \Phi^{\intercal} = P+\Delta Q.
\end{align}
Note that both $P$ and $Q$ are symmetric $k \times k$ matrices. Recall that $\Delta$ is a diagonal matrix, with non-zero and distinct diagonal elements. Hence, for any $1\leq i<j\leq k$, we now have both
\begin{align}\label{eq_PQij}
\left(X \Phi^{\intercal}\right)_{i,j} = P_{i,j}+\Delta_{i,i} Q_{i,j},
\end{align}
and
\begin{align}\label{eq_PQji}
\left(X \Phi^{\intercal}\right)_{j,i} &= P_{j,i}+\Delta_{j,j} Q_{j,i} \nonumber \\
&=P_{i,j}+\Delta_{j,j} Q_{i,j}.
\end{align}
Then using (\ref{eq_PQij}) and (\ref{eq_PQji}), we recover all the non-diagonal elements of $P$ and $Q$. Now, let $\hat{\underline{p}}_{i}$, and $\hat{\underline{q}}_{i}$ denote the $i^{\text{th}}$ row of matrices $P$ and $Q$ excluding their diagonal elements. Moreover, let $\hat{\Phi}_{i}$ denote the submatrix of $\Phi$ derived by removing the $i^{\text{th}}$ row, and finally let's denote the $i^{\text{th}}$ row of $\Phi$ by $\underline{\phi}_{i}$. Then for all $i\in\{1, \cdots, k\}$ we have 
\begin{align}
\hat{\underline{p}}_{i} = \underline{\phi}_{i} A \left(\hat{\Phi}_{i}\right)^{\intercal}, \nonumber \\
\hat{\underline{q}}_{i} = \underline{\phi}_{i} B \left(\hat{\Phi}_{i}\right)^{\intercal}, \nonumber
\end{align}
and using Lemma \ref{Lem_GVM}, $\left(\hat{\Phi}_{i}\right)^{\intercal}$ is an invertible matrix, as it is a transposed generalized Vandermonde matrix. Then we calculate the following two matrices using the above equations for $i\in\{1, \cdots, k-1\}$,
\begin{align}
\left[\begin{array}{c}
\underline{\phi}_{1} \\
\vdots \\
\underline{\phi}_{(k-1)}
\end{array}\right] A = \hat{\Phi}_{k} A,~\left[\begin{array}{c}
\underline{\phi}_{1} \\
\vdots \\
\underline{\phi}_{(k-1)}
\end{array}\right] B = \hat{\Phi}_{k} B. \nonumber
\end{align} 
Now since $\hat{\Phi}_{k}$ is invertible we have both $A$, and $B$.


\bibliographystyle{IEEEtran}
\bibliography{IEEEabrv,DSS_Bibliography}

\end{document}